\documentclass[oneside,11pt]{article}
\usepackage{amsmath}
\usepackage{dsfont}
\usepackage{amssymb}
\usepackage{amsthm}

\newtheorem{lemma}{Lemma}

\usepackage{natbib}

\def \dsR {\text{$\mathds{R}$}}

\def \svec {\text{\boldmath$s$}}

    \def \mY {\text{\boldmath$Y$}}
\def \zvec {\text{\boldmath$z$}}

\def \gammavec        {\text{\boldmath$\gamma$}}

\def \thetavec        {\text{\boldmath$\theta$}}
\def \varthetavec     {\text{\boldmath$\vartheta$}}

\newcommand{\LN}{\mbox{LN}}
\newcommand{\WN}{\mbox{WN}}
\newcommand{\GP}{\mbox{GP}}
\newcommand{\GRF}{\mbox{GRF}}
\newcommand{\GMRF}{\mbox{GMRF}}
\newcommand{\SPDE}{\mbox{SPDE}}
\newcommand{\WGP}{\mbox{WGP}}
\newcommand{\MCMC}{\mbox{MCMC}}
\newcommand{\FEM}{\mbox{FEM}}

\newcommand{\IWLS}{\mbox{IWLS}}
\newcommand{\DIC}{\mbox{DIC}}
\newcommand{\WAIC}{\mbox{WAIC}}

\newcommand{\nLS}{\mbox{nLS}}
\newcommand{\ES}{\mbox{ES}}
\newcommand{\cylCRPS}{\mbox{$\text{CRPS}_{cyl}$}}
\newcommand{\PWC}{\mbox{PWC}}

\usepackage{bm} 

\newcommand{\bs}{\bm{s}}
\newcommand{\bss}{\bm{s}}
\newcommand{\bz}{\bm{z}}

\usepackage{booktabs}
\usepackage{multirow}

\usepackage[dvipsnames]{xcolor}
\usepackage[figuresright]{rotating}

\usepackage[top=2.5cm,bottom=3cm, right=2.5cm,left=2.5cm]{geometry}
\title{Copula-based models for spatially dependent cylindrical data}

\author{Francesca Labanca$^{1}$, 
	Anna Gottard$^{1}$, and Nadja Klein$^{2}$ \\\\
\small$^{1}$Department of Statistics, Computer Science, Applications, University of Florence, Florence, Italy \\
	\small$^{2}$Scientific Computing Center, Karlsruhe Institute of Technology, Karlsruhe, Germany}

\date{ }
\begin{document}

\maketitle

\begin{abstract}
\noindent Cylindrical data frequently arise across various scientific disciplines, including meteorology (e.g., wind direction and speed), oceanography (e.g., marine current direction and speed or wave heights), ecology (e.g., telemetry), and medicine (e.g., seasonality and intensity in disease onset). Such data often occur as spatially correlated series of intensities and angles, thereby representing dependent bivariate response vectors of linear and circular components. To accommodate both the circular-linear dependence and spatial autocorrelation, while remaining flexible in marginal specifications, copula-based models for cylindrical data have been developed in the literature. However, existing approaches typically treat the copula parameters as constants unrelated to covariates, and regression specifications for marginal distributions are frequently restricted to linear predictors, thereby ignoring spatial correlation. In this work, we propose a structured additive conditional copula regression model for cylindrical data. The circular component is modeled using a wrapped Gaussian process, and the linear component follows a distributional regression model. Both components allow for the inclusion of linear covariate effects. Furthermore, by leveraging the empirical equivalence between Gaussian random fields (GRFs) and Gaussian Markov random fields, our approach avoids the computational burden typically associated with GRFs, while simultaneously allowing for non-stationarity in the covariance structure. Posterior estimation is performed via Markov chain Monte Carlo simulation. We evaluate the proposed model in a simulation study and subsequently in an analysis of wind directions and speed in Germany. 
\end{abstract}

\noindent\textbf{Keywords:}
Bayesian inference; Dependence structure; Gaussian (Markov) random fields;\\ Non-stationary spatial process; Wrapped Gaussian process. \vspace{0.5cm}\\
 \noindent
{\footnotesize{{\bf Corresponding author}: Prof.~Dr.~Nadja Klein, Scientific Computing Center, Karlsruhe Institute of Technology, Zirkel 2, 76131 Karlsruhe, Germany, nadja.klein@kit.edu.\\
\noindent \textbf{Acknowledgments:} The work of Nadja Klein was supported by the German Research Foundation (Deutsche Forschungsgemeinschaft, DFG) through the Emmy Noether grant KL3037/1-1 and the TRR391 within the project A07, grant number 520388526. The first and second authors were partially supported by the MUR-PRIN grant 2022 SMNNKY, CUP B53D23009470006,  by Next Generation EU, Mission 4, Component 2, and the MUR Dept. of Excellence project 2023-2027 ReDS 'Rethinking Data Science' - University of Florence, the European Union - NextGenerationEU - National Recovery.}}

\clearpage
\section{Introduction}
Cylindrical data arise when observations involve pairs of a \emph{circular} (an angle with domain in $[0,2\pi)$, non-Euclidean), and a \emph{linear} variable (such as intensities with domain in a subset of $\dsR$, Euclidean). They appear across several scientific disciplines, including meteorology, ecology, medicine, and biometrics. Common examples include wind direction and speed from wind profilers \citep{Car2009}, wave direction and height from deterministic wave models, and marine current direction and speed recorded by high-frequency radar networks \citep{Mei2021, Lag2025}. Other applications include animal movement telemetry and the study of seasonality in disease onset \citep{Mas2022, Hod2022}. Further applications can be found in \cite{Pew2021}. 
In many environmental and ecological studies, cylindrical data form spatially correlated series, bivariate vectors of angles and intensities observed at different sites \citep[e.g.,][]{Lag2018}. A key challenge lies in jointly modeling these variables by defining probability distributions that capture angular–linear dependence and potential temporal correlation, while remaining flexible yet parsimonious. Existing approaches use copula-based constructions \citep{Joh1978, Lag2019}, circular distributions based on trigonometric sums \citep{Dur2007}, and circulas, the analogue of copulas for directional data \citep{Jon2015}. Recent work introduced circular analogues of Fréchet–Hoeffding bounds for toroidal variables \citep{Oga2023}, but extensions to cylindrical or multivariate cases remain unexplored.

We propose a Bayesian copula-based model for cylindrical data that addresses these challenges. Unlike approaches that define the joint distribution directly on the cylinder, we adopt a wrapped approach in its most general form \citep{Mar2000}, which naturally yields the partially wrapped version on the cylinder $\mathbb{S}^1\times\mathbb{R}$ by wrapping only the first component around the circle.  Our key idea is to recover the latent winding number and model the unwrapped spatial process via copulas, thereby avoiding partial periodicity and enabling flexible dependence modeling.
While covariate-dependent copulas for linear data are well-studied \citep[e.g.,][]{Aca2011, Kle2016a, Vat2018}, their extension to cylindrical settings remains unexplored. Our work fills this gap by incorporating covariate effects into the copula parameter. We focus on parametric copula families to assess tail behaviour: Gaussian (symmetric, tail-independent), Clayton (lower-tail), and Gumbel (upper-tail). Spatial dependence is introduced through latent Gaussian random fields represented via stochastic partial differential equations (SPDEs), allowing efficient inference via sparse precision matrices \citep{Lin2011, Mil2020}.
By exploiting these latent fields 
and using copulas to govern the dependence between the unwrapped circular and the linear variable, we obtain flexible bivariate distributions on $\mathbb{R}^2$. 
We illustrate the methodology on wind direction and speed data from the Deutscher Wetterdienst (DWD), accounting for spatial dependence, circular–linear marginals, and covariate-driven copula structures. Unlike usual Gaussian-based approaches \citep{Lan2019}, our framework accommodates non-Gaussian margins and allows for non-linear dependence between the circular and linear variables. Although our application focuses on wind, the framework generalizes to any spatial circular–linear data.
 
The remainder of the paper is structured as follows. In Sec. \ref{s:model} we introduce the partially wrapped conditional copula regression
models, with details about the main building blocks. Sec. \ref{s:inf} presents Bayesian inference in the proposed partially wrapped conditional copula 
regression models. Sections \ref{s:sim} and \ref{s:appl} empirically study our approach through simulations and its application to wind data.  Sec. \ref{s:discuss} concludes and discusses directions of future research.

\section{Bivariate conditional copula models with mixed circular-linear marginals}
\label{s:model}
 
In this section, we outline our model specification and exemplify it along with specific choices relevant to our application in Sec. \ref{s:appl}. 

Our focus is on bivariate spatial data  $(\varphi_1(\svec),Y_2(\svec))^\top$, where $\varphi_1(\bss)$ and $Y_2(\bss)$ denote spatial stochastic processes observed at locations $\bss \in \mathcal{D}\subset \dsR^2$. Without loss of generality, we assume that $\varphi_1(\svec)$ is a circular response taking values in~$[0, 2\pi)$, and $Y_2(\svec)$ is a linear continuous response taking values in $\mathbb{R}$. Following the partially wrapped approach of \citet{Jon2012}, we define $\varphi_1(\svec)$ as the wrapped counterpart of an underlying unwrapped linear spatial process $Y_1(\svec) \in \mathbb{R}$ for all $\svec$,
\begin{equation}\label{eq:circvar}
    \varphi_1(\svec) \,=\, Y_1(\svec) \;\; \text{mod}\; 2\pi. 
\end{equation}
Knowledge of $\varphi_1(\svec)$ alone is insufficient to recover the linear representation $Y_1(\svec)$. To address this, we introduce a latent \emph{winding number} process $k(\svec)\in \mathbb{Z}$ for all $\svec$, such that $Y_1(\svec)=\varphi_1(\svec) + 2\pi k(\svec)$.
This identity establishes a one-to-one correspondence between $Y_1(\svec)$ and the pair $(\varphi_1(\svec), k(\svec))^\top$. To model the joint behavior of $(\varphi_1(\svec), Y_2(\svec))^\top$, we adopt a data augmentation strategy and instead model $\mY(\svec)=(Y_1(\svec),Y_2(\svec))^\top=(\varphi_1(\svec)+2\pi k(\svec),Y_2(\svec))^\top$.

To build a flexible joint regression model for $\mY(\svec)$ given some covariates $\bz(\svec)\in\dsR^p$, we adopt the conditional copula regression framework of \citet{Kle2016a}, which enables flexible distributional regression by decoupling the specification of marginal distributions of $Y_1(\svec),Y_2(\svec)$ from the modeling of the dependence between the two. The approach relies on Sklar's theorem, which states that we can write the joint conditional distribution
$F_{1,2}(y_{1}(\svec), y_{2}(\svec) \mid \bz(\svec))$  of $\mY(\svec)$ given $\zvec(\svec)$ and $\svec$ as 
$$F_{1,2}(y_{1}(\svec), y_{2}(\svec) \mid \bz(\svec)) = C\!\left( F_1(y_{1}(\svec) \mid \bz(\svec)), \, F_2(y_{2}(\svec) \mid \bz(\svec)) \,\middle|\, \bz(\svec) \right),$$ 
where $F_1(\cdot\mid \zvec)$ and $F_2(\cdot\mid\zvec(\svec))$ are the marginal conditional cumulative distribution functions (CDFs) of $Y_1(\svec),Y_2(\svec)$, and $C(\cdot,\cdot \mid \bz(\svec))$ is the covariate-dependent copula function. Assuming $\mY(\svec)$ to be continuous, this representation is unique and the marginal densities  $p_1(y_{1}(\svec) \mid \bz(\svec))$, $ p_2(y_{2}(\svec) \mid \bz)$, and the copula density $c(\cdot,\cdot \mid \bz(\svec))=\frac{\partial^2}{\partial y_1(\svec)\partial y_2(\svec)}C(\cdot,\cdot \mid \bz(\svec))$ exist. Hence, the joint conditional probability density function (PDF) of $\mY(\svec)$ is 
\begin{align*}\label{eq:cop}
    p_{1,2}(y_{1}(\svec), y_{2}(\svec) \mid \bz(\svec)) &\;=\; c\!\left( F_1(y_{1}(\svec)\mid \bz(\svec)), \, F_2(y_{2}(\svec)\mid \bz(\svec)) \,\middle|\, \bz(\svec) \right) p_1(y_{1}(\svec)\mid \bz(\svec))\,p_2(y_{2}(\svec)\mid \bz(\svec)). 
\end{align*}
The conditional PDF of the partially wrapped (PW) vector $(\varphi_1(\svec),Y_2(\svec))^\top$ is 
\begin{equation}\label{eq:pw}
    p_{1,2}^{PW}(\varphi_1(\svec),y_{2}(\svec)\mid \bz(\svec)) \;=\; \sum_{k(\svec) \in \mathbb{Z}} p_{1,2}(\varphi_1(\svec) + 2\pi k(\svec)\,, \, y_{2}(\svec)\mid \bz(\svec)). 
\end{equation}
see the Supplementary Material (SM) \ref{apx:PWNmarg} for details. Furthermore, the marginal PDF $p_1$ is the joint distribution of $(\varphi_1(\bss), k(\bss))^\top$, and marginalizing with respect to the winding number $k(\bss)$ yields the conditional density of $\varphi_1(\svec)$, as summarized in Lemma \ref{lem:pw}. 
\begin{lemma}\label{lem:pw}
The margins of the $PW$ copula model defined in 
\eqref{eq:pw} are $p_2$ and the wrapped counterpart of $p_1$, i.e., \begin{equation}\label{eq:w}
p_1^W(\varphi_1(\svec)\mid \bz(\svec)) \;=\; \sum_{k(\svec) \in \mathbb{Z}} p_{1}(\varphi_1(\svec) + 2\pi k(\svec) \mid \bz(\svec)).
\end{equation}
\end{lemma}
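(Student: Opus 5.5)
The plan is to integrate the partially wrapped density \eqref{eq:pw} over each component in turn. We swap the infinite sum over the winding number with the integral, which Tonelli's theorem permits because all summands are nonnegative. The copula representation of $p_{1,2}$ from Sklar's theorem then reduces each resulting integral to a univariate marginal density. Throughout we fix $\svec$ and $\bz(\svec)$ and suppress them, writing $\varphi=\varphi_1(\svec)$, $y_2=y_2(\svec)$ and $k=k(\svec)$.

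\textbf{Margin of $Y_2$.} We integrate \eqref{eq:pw} over $\varphi\in[0,2\pi)$. By Tonelli,
\begin{equation*}
\int_0^{2\pi}\sum_{k\in\mathbb{Z}} p_{1,2}(\varphi+2\pi k,y_2)\,d\varphi=\sum_{k\in\mathbb{Z}}\int_{2\pi k}^{2\pi(k+1)} p_{1,2}(y_1,y_2)\,dy_1=\int_{\dsR} p_{1,2}(y_1,y_2)\,dy_1 .
\end{equation*}
The middle equality uses the substitution $y_1=\varphi+2\pi k$. The last equality holds because the intervals $[2\pi k,2\pi(k+1))$, $k\in\mathbb{Z}$, partition $\dsR$. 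It remains to show that $\int_{\dsR} p_{1,2}(y_1,y_2)\,dy_1=p_2(y_2)$. Write $p_{1,2}=c(F_1,F_2)\,p_1\,p_2$ and substitute $u=F_1(y_1)$, so that $du=p_1(y_1)\,dy_1$. The integral becomes $p_2(y_2)\int_0^1 c(u,F_2(y_2))\,du$. This last integral equals $1$, since it is $\partial_v C(1,v)$ evaluated at $v=F_2(y_2)$, and $C(1,v)=v$ because copulas have uniform margins.

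\textbf{Margin of $\varphi_1$.} We integrate \eqref{eq:pw} over $y_2\in\dsR$ and again exchange sum and integral. Each summand gives $\int_{\dsR} p_{1,2}(\varphi+2\pi k,y_2)\,dy_2$. By the symmetric computation, substituting $v=F_2(y_2)$ and using $C(u,1)=u$, this equals $p_1(\varphi+2\pi k)$. Summing over $k$ then gives exactly \eqref{eq:w}. As a sanity check, $p_1^W$ integrates to $1$ over $[0,2\pi)$ by the same unfolding identity as in the first step.

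\textbf{Where care is needed.} The step most likely to need care is the justification of exchanging the sum and the integrals. Tonelli handles this cleanly because all terms are nonnegative. The copula-density identities $\int_0^1 c(u,v)\,du=1$ and $\int_0^1 c(u,v)\,dv=1$ also require $C$ to be absolutely continuous. This is guaranteed by the paper's standing assumption that $\mY(\svec)$ is continuous with the densities existing. Alternatively, one can bypass copulas entirely: the margins of the joint density $p_{1,2}$ are $p_1$ and $p_2$ by Sklar's theorem itself.
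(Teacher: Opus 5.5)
Your proposal is correct and follows essentially the same route as the paper's proof. You exchange the sum over $k$ with the integral, unfold the $2\pi$-shifted integrals over $[0,2\pi)$ into a single integral over $\mathbb{R}$, and use that the margins of $p_{1,2}$ are $p_1$ and $p_2$ (the paper simply cites Sklar's theorem where you verify $\int_0^1 c(u,v)\,du=1$ explicitly), and your appeal to Tonelli for the interchange is, if anything, a cleaner justification than the dominated convergence theorem the paper invokes, since the summands are nonnegative and no dominating function is needed.
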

The proof of Lemma \ref{lem:pw} is given in the SM \ref{app:prooflemma}.

As a key advantage, we can model arbitrary marginal distributions and their dependence separately as functions of the covariates. We focus on parametric marginal distributions $p_1,p_2$ and one-parameter copulas with association parameter $\rho$. We then model each distributional parameter as a function of covariates and space. 
Specifically, we are concerned with choosing the following components: 
(i)  a parametric copula;  
(ii) a parametric conditional PDF for the circular response; 
(iii) a parametric conditional PDF for the linear response;  
(iv) predictor specifications for all distribution parameters; and  
(v) a spatial specification that allows for nonstationarity in both the mean and covariance.

\paragraph{Dependence structure}
To capture different dependence structures, we use the Gaussian, Clayton, and Gumbel copulas (see Table~\ref{t:cop}, SM \ref{a:Table_copula}). The Gaussian copula is radially symmetric and exhibits no tail dependence, implying independence in the limit. Within the Archimedean class, the Clayton copula models lower-tail dependence with coefficient $2^{-1/\rho}$, whereas the Gumbel copula captures upper-tail dependence with coefficient $2-2^{1/\rho}$. Each is fully specified by $\rho$, with different admissible ranges. Rotated Archimedean copulas can model negative dependence, but we omit details here, as they did not yield reasonable fits for our application. To incorporate covariate effects, we let $\rho(\bss) = h_\rho(\eta_{\rho}(\bss))$, where $h_\rho(\cdot)$ is a bijective function ensuring parameter constraints and $\eta_{\rho}(\bss)$ is an unrestricted predictor specified later on. 
For further details on copula families and dependence properties, we refer to \citet{Nel2006}.


\paragraph{Modeling the marginal distribution of $\varphi_1(\svec)$}
Among wrapped distributions, the wrapped normal (\WN) is a common choice because it links the circular response to an underlying linear Gaussian process (\GP), allowing coherent spatial and regression structures \citep{Jon2012,Mar2022}. Consequently, we choose $p_1^W$ to be a $\WN$ distribution induced by $\gamma_1^*(\bm{s})\sim \GP(\mu_{\gamma_1}(\bm{s})\,,\,\mathcal{K}_1(\bm{s},\bm{s}'))$ with mean and variance function $\mu_{\gamma_1}(\bm{s}),\,\mathcal{K}_1(\bm{s},\bm{s}')$, respectively.
We also consider a white noise process $\epsilon(\bm{s})$, modeling measurement error or microscale variation, with independent and identically distributed (i.i.d.) realizations as $N(0, \varsigma_1^2)$. 
Hence, defining $\gamma_1(\bm{s})\sim \GP(0\,,\,\mathcal{K}_1(\bm{s},\bm{s}'))$ a zero-mean \GP, $\varphi_1(\bm{s})= \mu_{\gamma_1}(\bm{s})+ \gamma_1(\bm{s})+\epsilon \,\; (\text{mod} \, 2 \pi)$ follows a Wrapped Gaussian process ($\WGP$) with the same mean and covariance function of $\gamma_1^*(\bm{s})$. 
We define the model for the circular random vector $\bm{\varphi}_1= (\varphi_1(\bm{s}_1),\varphi_1(\bm{s}_2), \ldots ,\varphi_1(\bm{s}_n))^\top$ at some specific locations $\bm{s}_1, \bm{s}_2, \ldots , \bm{s}_n$ in $\mathcal{D}\subset \mathbb{R}^2$, as
\begin{equation}\begin{aligned}\label{eq:wn}
\bm{\varphi}_1 \mid \bm{\mu}_{\gamma_1},
\bm{\gamma}_1,\varsigma_1^2 &\sim \WN\left( \,
\bm{\mu}_{\gamma_1}
+ \bm{\gamma}_1 \,,\, \varsigma_1^2 \mathbf{I}\,\right) \quad \text{ and } \quad 
\bm{\gamma}_1(\bm{s}) &\sim GP\left( 0, \mathcal{K}_1(\bm{s}, \bm{s}')\right), 
\end{aligned}\end{equation}
with the mean vector $\bm{\mu}_{\gamma_1} = (\mu_{\gamma_1}(\bm{s}_1),\mu_{\gamma_1}(\bm{s}_2). \ldots ,\mu_{\gamma_1}(\bm{s}_n))^\top$,  $\bm{\gamma}_1=(\gamma_1(\bm{s}_1),\gamma_1(\bm{s}_2), \ldots,\gamma_1(\bm{s}_n))^\top$, and $\mathbf{I}$ the $n \times n $ identity matrix.
We denote by $h_{\mu_1}(\cdot)$, $h_{\kappa_1}(\cdot)$, and $h_{\tau_1}(\cdot)$, the elementwise mean, spatial-range-related, and  marginal-variance-related parameter response functions. We use the identity for $h_{\mu_1}(\cdot)$ and the exponential the others to ensure positivity of $\kappa_1,\tau_1$. Details on the spatial dependence modeling and  the corresponding predictors, $\eta_{\mu_1}(\bm{s}_i),\eta_{\kappa_1}(\bm{s}_i), \eta_{\tau_1}(\bm{s}_i)
$, are described later in this section.
\paragraph{Modeling the marginal distribution of 
$Y_2(\svec)$}
The distributional choice for the linear margin should be application-specific.
The log-normal distribution provides a reasonable fit to the data. We define $Y_2(\boldsymbol{s})$ as the log-transformed sum of $\gamma_2^*(\bm{s}) \sim \GP(\mu_{2}(\bm{s})\,, \, \mathcal{K}_2(\bm{s}, \bm{s}'))$ and white noise $\epsilon\sim N(0, \varsigma_2^2)$.
Analogously to what proposed for $\varphi_1(\svec)$, we define the model for the linear random vector $\bm{Y}_2=(Y_2(\bm{s}_1),Y_2(\bm{s}_2), \ldots ,Y_2(\bm{s}_n))^\top$ as
\begin{align}\label{eq:ln}
 \bm{Y}_2 \mid 
 \bm{\mu}_{\gamma_2},\bm{\gamma}_2, \varsigma_2^2 &\sim \LN(\,
 \bm{\mu}_{\gamma_2}+ \bm{\gamma}_2 \,,\, \varsigma_2^2 \mathbf{I}\,
)   \quad \text{ and } \quad 
\bm{\gamma}_2(\bm{s}) \sim GP\left( 0, \mathcal{K}_2(\bm{s}, \bm{s}')\right), 
\end{align}
with $\bm{\mu}_{\gamma_2} = (\mu_{\gamma_2}(\bm{s}_1),\mu_{\gamma_2}(\bm{s}_2) \ldots ,\mu_{\gamma_2}(\bm{s}_n))^\top$ and $\bm{\gamma}_2=(\gamma_2(\bm{s}_1),\gamma_2(\bm{s}_2), \ldots,\gamma_2(\bm{s}_n))^\top$. 
We denote $h_{\mu_2}$, $h_{\kappa_2}$, and $h_{\tau_2}$ the elementwise mean,  spatial-range-related, and  marginal-variance-related parameter response functions. We use the identity for the first one, and the exponential for the remaining, to ensure positivity of $\kappa_2,\tau_2$. 
Details about the spatial dependence modeling are given below.
The corresponding predictors are $\eta_{\mu_2}(\bm{s}_i), \eta_{\kappa_2}(\bm{s}_i),
\eta_{\tau_2}(\bm{s}_i)$.

\paragraph{Predictor specifications}\label{ss:pred}
To account for potential covariates and the spatial dependence in the mean and in the covariance functions, we assume linear $\GP s$ with $\mu_{\gamma_1}(\bm{s})=\beta_{\mu_1,0} +\bm{z}_\beta(\bm{s})\bm{\beta}_{\mu_1,1}$ and $\mu_{\gamma_2}(\bm{s})=\beta_{\mu_2,0} +\bm{z}_\beta(\bm{s})\bm{\beta}_{\mu_2,1}$ as mean functions and the following predictor specifications for distributional parameters of the joint PDF,
\begin{equation*}
\begin{aligned}[c]
\eta_{\mu_1}(\bm{s}) &= \beta_{\mu_1,0} + \zvec_{\beta}(\svec)\bm{\beta}_{\mu_1,1}+\gammavec_1(\svec,\zvec_{\kappa}(\svec)),\\
\eta_{\kappa_1}(\bm{s})& = \theta_{\kappa_1,0}+\zvec_{\kappa}(\svec)\bm{\theta}_{\kappa_1,1}, \\
\eta_{\tau_1}(\bm{s})& = \theta_{\tau_1,0},\;\\
\eta_{\rho}(\bm{s}) &= \beta_{\rho,0} + \zvec_{\rho}(\svec)\bm{\beta}_{\rho,1},
\end{aligned}
\phantom{aaaaaa}
\begin{aligned}[c]
\eta_{\mu_2}(\bm{s}) &= \beta_{\mu_2,0} + \zvec_{\beta}(\svec)\bm{\beta}_{\mu_2,1}+\gammavec_2(\svec,\zvec_{\kappa}(\svec)),\\
\eta_{\kappa_2}(\bm{s}) &= \theta_{\kappa_2,0}+\zvec_{\kappa}(\svec)\bm{\theta}_{\kappa_2,1},\\
\eta_{\tau_2}(\bm{s})& = \theta_{\tau_2,0},\\
\phantom{aaaa}
\end{aligned}
\end{equation*}
where $\beta_{\mu_1,0},\beta_{\mu_2,0},\beta_{\rho,0}, \theta_{\tau_1,0},  \theta_{\tau_2,0}$ are intercept terms, $\bm{\beta}_{\mu_1,1},\bm{\beta}_{\mu_2,1},\bm{\beta}_{\rho,1}, \bm{\theta}_{\kappa_1,1},\bm{\theta}_{\kappa_2,1}$ are  regression coefficient vectors for the covariate effects $\zvec_{\beta}(\svec)$ (marginal means),  $\zvec_{\rho}(\bm{s})$  (dependence parameter), and $\zvec_{\kappa}(\svec)$ (spatial range). 
The spatial effects $\gammavec_1,\gammavec_2$ are zero-mean Gaussian Markov random fields (GMRFs) with a non-stationary covariance function, as described later on. 
The proposed framework allows  nonlinear effects both in the margins and the dependence structure. This flexibility parallels that of structured additive predictors \citep[see][for further details]{Woo2017}. Given the limited sample size in our application, we focus on linear covariate effects and the flexible modeling of spatial effects. In this regard, we found it sufficient to model $\kappa_1(\svec),\kappa_2(\svec)$ while keeping  $\tau_1$ and $\tau_2$ constant in our application. 

\paragraph{Modeling spatial effects}
A Gaussian random field (\GRF) is defined as a continuously indexed spatial process with all finite-dimensional vectors multivariate Gaussian.
Following \citet{Lin2011}, expressing a $\GRF s$ 
through an SPDEs allow modeling the spatial effects  as a random field defined in continuous space, while enabling efficient computation via discrete GMRFs. 
Specifically, the stationary Matérn $\GRF$ $\gamma(\bs)$ with $\bs \in \dsR^2$ arises as the solution to the $\SPDE$
\begin{equation*}\label{eq:spde}
 (\kappa^2 - \Delta)^{\alpha_\nu/2} \, (\tau \, \gamma(\bs)) = W(\bs),
\end{equation*}
where $W$ is Gaussian white noise, $\Delta$ is the Laplacian, $\kappa>0$ controls spatial range $\varrho$, that is the distance at which correlation decays to 0.05, $\tau>0$ is related to the marginal variance   $\sigma^2$, and $\nu>0$ controls the smoothness. The stationary solution has Matérn covariance
\begin{equation}\label{eq:matern}
\mathcal{K}(\bm{s}, \bm{s}') = \sigma^2 \frac{1}{2^{\nu-1}\Gamma(\nu)}
\left(
\kappa
\,h(\bm{s},\bm{s}' )\right)^\nu
K_\nu\!\left(\kappa \, h(\bm{s},\bm{s}' )\right),
\end{equation}
with $h(\bm{s},\bm{s}' )=\|\bm{s}-\bm{s}^\prime\|$, and $K_\nu$ the modified Bessel function. The SPDE and Matérn parameters are coupled. Specifically, the Matérn smoothness parameter is $\nu= \alpha_\nu-1$, and the  variance $\sigma^2$ is proportional to $\tau^{-2}\kappa^{-2\nu}$. Nonstationarity arises by letting $\tau(\bss)$ and $\kappa(\bss)$ to vary smoothly over space \citep{Ing2015}. 
Recent work by \citet{Bol2019} extends the $\SPDE$ approach to fractional powers $\alpha_\nu>1$, but only in the stationary case. Nonetheless, a closed-form expression for the $\GMRF$ precision matrix as a function of $\kappa$ and $\tau$ is available only for $\nu = 1$ (i.e., $\alpha_\nu = 2$). For this reason, we set $\nu = 1$ throughout.
For each $\GRF$ $\gamma_\ell(\bs)$ in the two margins, $\ell\in\{1,2\}$, we thus specify the log-linear models
\begin{equation}\label{eq:tau-kappa}
   \log \tau_\ell(\svec) = \theta_{\tau_\ell,0}, 
   \qquad 
   \log \kappa_\ell(\bm{s}) = \theta_{\kappa_\ell,0} + \bm{z}_\kappa(\bm{s})\bm{\theta}_{\kappa_\ell,1},
\end{equation}
where $\bm{z}_\kappa(\bm{s})$ collects covariates driving local variation and $\bm{\theta}_{\kappa_\ell,1}$ are regression coefficients. The parameters $\theta_{\tau_\ell,0}$ and $\theta_{\kappa_\ell,0}$ correspond to the stationary baseline. Provided $\kappa_\ell(\bm{s})$ varies smoothly, the SPDE admits a local Matérn interpretation via nominal approximations

\begin{equation}\label{eq:rho-sigma}
 \varrho_\ell(\bm{s}) \approx \frac{\sqrt{8\nu}}{\kappa_\ell(\bm{s})}, 
 \qquad
 \sigma_\ell^2(\bm{s}) \approx 
 \frac{\Gamma(\nu)}{\Gamma(\alpha_\nu)\,4\pi\,\kappa_\ell(\bm{s})^{2\nu}\tau_\ell^2}.
\end{equation}
A $\GMRF$ approximation is obtained via the finite element method ($\FEM$). The domain is triangulated into $M$ nodes, and the $\GRF$ is approximated by
$   \gamma_\ell(\bm{s}) = \sum_{m=1}^M \psi_m(\bm{s})\gamma_{\mu_\ell, m}$, 
with $\{\psi_m(\bm{s})\}_{m=1}^M$ piecewise linear basis functions with compact support, whose 
accuracy improves with finer meshes, while the Markov property ensures computational tractability.
\section{Bayesian inference}
\label{s:inf}
This section details the prior specification, hyperparameter choices, and posterior evaluation of the proposed hierarchical model. 
Throughout, we denote the matrices of spatial covariates in the mean, 
and copula predictors as
$\bm{Z}_\beta=(\bm{z}_{\beta}(\bm{s}_1),\ldots,\bm{z}_{\beta}(\bm{s}_n))^\top$, 
and $\bm{Z}_\rho=(\bm{z}_{\rho}(\bm{s}_1),\ldots,\bm{z}_{\rho}(\bm{s}_n))^\top$, respectively. 
Following what stated in the previous section, each GRF is approximated through a finite element basis expansion 
with piecewise linear basis functions $\{\psi_m(\bm{s})\}_{m=1}^M$ and Gaussian weights 
$\bm{\gamma}_{\mu_\ell}=(\gamma_{\mu_\ell,1},\ldots,\gamma_{\mu_\ell,M})^\top$ for $\ell=1,2$. 
This yields two GMRFs characterized by the hyperparameter vector 
$\bm{\theta}_\ell=(\theta_{\tau_\ell,0},\,\theta_{\kappa_\ell,0},\,\bm{\theta}_{\kappa_\ell,1}^\top)^\top$.
We collect the basis functions in $\bm{\psi}(\bm{s})=(\psi_1(\bm{s}),\ldots,\psi_M(\bm{s}))^\top$ and define the matrix
$\bm{\psi}=(\bm{\psi}(\bm{s}_1),\ldots,\bm{\psi}(\bm{s}_n))^\top$.
The model parameters are grouped as 
$\bm{\vartheta}_1=(\beta_{\mu_1,0}, \bm{\beta}_{\mu_1,1}^\top,\bm{\gamma}_{\mu_1}^\top, \varsigma_1^2)^\top$, 
$\bm{\vartheta}_2=(\beta_{\mu_2,0}, \bm{\beta}_{\mu_2,1}^\top,\bm{\gamma}_{\mu_2}^\top, \varsigma_2^2)^\top$, 
and $\bm{\vartheta}_\rho=(\beta_{\rho,0}, \bm{\beta}_\rho^\top)^\top$ for the two marginal and the copula components. A graphical illustration of the full hierarchical models is shown in Figure~\ref{f:DAG}.

\subsection{Prior specifications}\label{ss:prior}
We adopt conjugate priors whenever possible. We assign Gaussian priors for the regression coefficients and inverse gamma (IG) priors for their variances. For the overall level of the mean and copula predictors, $\beta_{\mu_\ell,0}$, $\beta_{\rho,0}$, we choose Gaussian priors with zero mean and large variances as a weakly informative prior on the log-scale.
For the remaining coefficients, $\bm{\beta}_{\mu_\ell,1}$ and $\bm{\beta}_{\rho,1}$, we assume independent, homoscedastic priors, with zero-mean Gaussian distributions with variances $\xi_{\mu_\ell}^2$ and $\xi_\rho^2$, respectively. 
The error variance $\varsigma_\ell^2$ and the scaling variances $\xi_{\mu_\ell}^2$ and $\xi_{\rho}^2$ use IG priors with shape and scale equal to 0.001 to obtain a data-driven smoothness.
 Discretization of the spatial domain yields Gaussian 
 vectors $\bm{\gamma}_{\mu_\ell}$ with a sparse precision matrix $\bm{Q}(\bm{\theta}_\ell)$, depending on the $\FEM$ and on the spatial regression coefficients $\bm{\theta}_\ell$, as defined in~\eqref{eq:tau-kappa}.
More details about the construction of $\bm{Q}(\bm{\theta}_\ell)$ can be found in the SM~\ref{apx:GMRF}.

\begin{figure}[ht]
\begin{center}
\centerline{\includegraphics[width=6.5in]{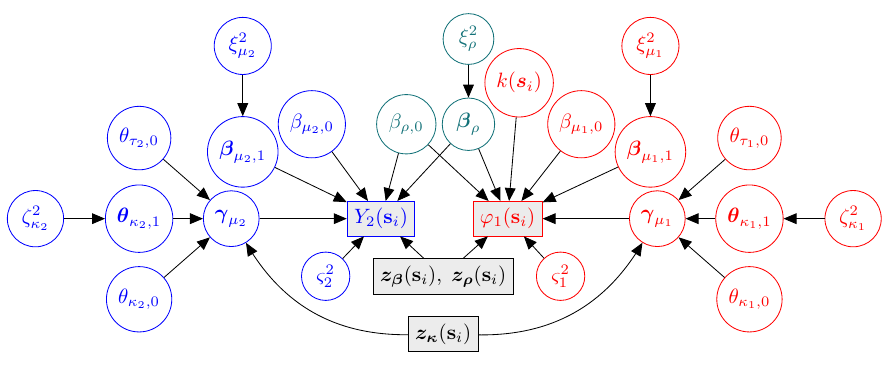}}
\end{center}
\caption{Graphical representation of the full hierarchical model. Shown are the spatially dependent covariates (grey), the model parameters and data $\{\varphi_1(\mathbf{s}_i)\}_{i=1}^n$ related to the wrapped circular random variable red),  the model parameters and data $\{y_2(\mathbf{s}_i)\}_{i=1}^n$ related to the linear random variable (blue), and the copula-parameters (petrol).}\label{f:DAG}
\end{figure}

Here, $\theta_{\tau_\ell,0}$ and $\theta_{\kappa_\ell,0}$ represent the stationary baseline, with \eqref{eq:rho-sigma} giving their local interpretation of range and marginal variance.
We assign to them independent uniform priors, chosen to yield interpretable ranges for the marginal variance and correlation range. To control deviations from stationarity, we assign a standard Gaussian prior to the spatial coefficients $\boldsymbol{\theta}_{\kappa_\ell,1}$, 
with smoothing variance $\zeta_{\kappa_\ell}^2$ endowed with a penalized complexity (PC) prior \citep{Sim2017}. 
The PC prior provides a principled mechanism to penalize model complexity by shrinking towards a predefined 
\emph{base model}, here, a stationary covariance function. 
Hence, non-stationarity is introduced only when supported by the data, ensuring parsimony and interpretability. 
Under the PC prior, a model component is treated as a flexible extension of a simpler base model, 
and the prior density decreases exponentially with the Kullback--Leibler divergence from the base, 
thereby enforcing a constant-rate penalization of complexity. The guiding principles are listed in SM \ref{apx:PC}. \citet{Kle2016b} proved that if the base model is obtainable for $\xi_{\kappa_\ell}^2 \rightarrow 0$, i.e., $\boldsymbol{\theta}_{\kappa_\ell,1}=\mathbf{0}$ in our case, and thus, the field predictor $\eta_{\kappa_\ell}$ is constant, if $\xi_{\kappa_\ell}^2$-prior is constructed according to the PC-prior principles, $\xi_{\kappa_\ell}^2$ follows a Weibull distribution with shape $1/2$ and scale $\lambda$, i.e., $\xi_{\kappa_\ell}^2 \sim \mathrm{Weibull}\!\left(\tfrac{1}{2},\,\lambda\right)$.
The rate parameter $\lambda$ is determined via a user-defined tail probability condition that encodes prior beliefs on the plausible degree of non-stationarity and is elicited as described later.
A prior sensitivity analysis for $\xi_{\mu_1}^2$ and $\zeta_{\mu_1}^2$ in the circular marginal model can be found in \citet{Mar2022}.
 
For the latent winding numbers $k(\bm{s}_i) \in \mathbb{Z}$, assigning a prior over the full integer domain is computationally intractable. 
Following \citet{Jon2020} and \citet{Mar2022}, we adopt a truncated uniform prior with restricted support, i.e., $p(k_i)= \frac{1}{3} \mathbf{1}_{\{k_i \in \{-1,0,1\} \} }$, which provides a practical and accurate approximation for the univariate $\WN$ distribution \citep[][]{Kur2014b}.
Thus, the resulting hierarchical model is the partially wrapped copula ($\PWC$) model with wrapped normal and log-normal margins, specified as
\begin{equation}\label{eq:DAG}
\begin{aligned} 
\big(\varphi_{1}(\bm{s}_i), Y_{2}(\bm{s}_i) \big)^\top &\mid \bm{z_\beta}(\bm{s}_i), \bm{z}_\rho(\bm{s}_i), \vartheta_1,\vartheta_2, \vartheta_\rho  \sim \PWC(\mathbf{\vartheta}_1,\mathbf{\vartheta}_2, \mathbf{\vartheta}_\rho)  \\  
 {\beta}_{\rho,0} & \sim N(0, 100), \;  \bm{\beta}_{\rho,1} \mid  \xi_{\rho}^2 \sim N(\bm{0}, \xi_\rho^2\,\mathbf{I}), \; 
   \xi_{\rho}^2 \sim  IG(0.001, 0.001),\\
  {\beta}_{\mu_\ell,0} & \sim N(0, 10), \; 
  \bm{\beta}_{\mu_\ell,1} \mid  \xi_{\mu_\ell}^2 \sim N(\bm{0}, \xi_{\mu_\ell}^2\,\mathbf{I}), \; 
   \xi_{\mu_\ell}^2 \sim  IG(0.001, 0.001),\\
  \bm{\gamma}_{\mu_\ell} \mid z_\kappa(\bm{s}_i), \bm{\theta}_\ell &\sim N(\mathbf{0}, \bm{Q}^{-1}(\bm{\theta}_\ell)),\\{\theta}_{\tau_1,0}&\sim U(a_{\tau_1}, b_{\tau_1}), \; 
  {\theta}_{\tau_2,0}\sim U(a_{\tau_2}, b_{\tau_2}), 
  \\
  {\theta}_{\kappa_\ell,0}&\sim U( a_{\kappa}, b_{\kappa}), \; 
  \bm{\theta}_{\kappa_\ell,1} \mid \zeta_{\kappa_\ell}^2 \sim N(\bm{0}, \zeta_{\kappa_\ell}^2 \mathbf{I}), \;
  \zeta_{\kappa_\ell}^2 \sim \text{PC}(c
  , \alpha_\zeta
  ),\\
  \varsigma_\ell^2 &\sim IG(0.001, 0.001), \; p(k_i)= \frac{1}{3} \mathbf{1}_{\{k_i \in \{-1,0,1\}  \}}. 
\end{aligned}
\end{equation}
The specific ranges for uniform priors are described next. 

\paragraph{Prior scaling}
To penalize the non-stationarity in the covariance of the margins, we want the uniform priors to ensure that the spatial ranges and marginal variances fall within predefined intervals.
After rescaling the domain, for $\mathcal{D}\subseteq[0,1]^2$, the spatial ranges are in $[0.01,1]$ and marginal variances in $\left[0.01,\,s_{\max}^2\right]$, for a chosen $s_{\max}^2 >0.01$. The latter follows from the nominal approximation in \eqref{eq:tau-kappa}, linking $\varrho(\bm{s}_i)$ to $\kappa(\bm{s}_i)$, and $\sigma(\bm{s}_i)$ to $\tau(\bm{s}_i)$ and $\kappa(\bm{s}_i)$.Hence, $
   \theta_{\tau_\ell,0} \sim \mathcal{U}(-7-\log s_{\max},\,0), 
   \,
   \theta_{\kappa_\ell,0} \sim \mathcal{U}(1,6).
$ 
For circular responses, we choose $s_{max}=2\pi$, since it reflects the maximal support length. For linear response, we refer to the Beaufort Wind Scale ($(0,31.5]$ m/s).
On a logarithmic scale, this gives $s_{max}=3.45$. 
The scale parameter $\lambda$ is calibrated by simulation, using the tail bound
\begin{equation}\label{eq:pcprior}
   \Pr \left( \max_{\bm{s}_i\in\mathcal{D}}\,\big|z_\kappa(\bm{s}_i)\bm{\theta}_{\kappa_\ell}\big|\leq c\right) \geq\ 1-\alpha_\zeta,
\end{equation}
for user–defined constants $\alpha_\zeta\in(0,1)$ and $c>0$, which controls the magnitude of the nonstationary deviation. This ensures that with probability at least $1-\alpha_\zeta$, the nonstationary effect in the covariance
does not exceed the bound $c$.
Under the rescaled domain assumption, based on the range constraint $\rho(\bm{s}_i)\in[0.01,1]$, we choose the bound $c$ as
\begin{equation}\label{eq:c-bound}
   c =\Biggl \lfloor  \tfrac{1}{2}  \Biggm|	
 \log\!\Big(\tfrac{\sqrt{8}}{0.01}\Big)-\log(\sqrt{8}) \Biggm| \Biggr \rfloor.	
\end{equation}

\subsection{Likelihood}\label{ss:lik}
With the specific choices of copula, and marginal distributions, the joint likelihood of our PWC model for a dataset of $n$ observations $\lbrace y_1(\svec_i),\varphi_2(\svec_i),\zvec_\beta(\svec_i),\zvec_\rho(\svec_i),\svec_i\rbrace_{i=1}^n$  is
\begin{equation}\label{eq:lik}
\begin{aligned}
l(\bm{\vartheta}_1,\bm{\vartheta}_2, \bm{\vartheta}_\rho) = \prod_{i=1}^n 
& 
c( \, \Phi(\varphi_1(\bm{s}_i)+2\pi k(\bm{s}_i) \mid z_{\bm{\beta}}(\bm{s}_i), \bm{\vartheta_1}
)
 \,, \, \Phi(\, \log(y_2(\bm{s}_i)) \mid z_{\bm{\beta}}(\bm{s}_i),\bm{\vartheta}_2) \, \mid \, z_{\bm{\rho}}(\bm{s}_i), \bm{\vartheta}_\rho) \cdot
\\
& \qquad \qquad
\phi(\varphi_1(\bm{s}_i)+2\pi k(\bm{s}_i) \mid z_{\bm{\beta}}(\bm{s}_i),\bm{\vartheta_1}) 
\cdot \frac{1}{y_2(\bm{s}_i)}\phi(\, \log(y_2(\bm{s}_i)) \mid z_{\bm{\beta}}(\bm{s}_i), \bm{\vartheta}_2) 
\end{aligned}
\end{equation}
where $\Phi$ and $\phi$ are the univariate Gaussian CDF and pdf, 
with parameters $\bm{\vartheta_1}$ and~$\bm{\vartheta_2}$. 

\subsection{Posterior estimation}
Posterior estimation in our $\PWC$ model becomes tractable by introducing the winding number vector $\bm{k}=(k(\svec_1),\ldots,k(\svec_n))^\top$ as a latent variable. 
Assuming prior independence among model parameters, the log-posterior distribution is
\begin{align*}p(\bm{\vartheta}_1,\bm{\vartheta}_2,\bm{\vartheta}_\rho,&\bm{\theta}_1,\bm{\theta}_2, \xi_{\mu_1}^2,\xi_{\mu_2}^2, \xi_{\mu_\rho}^2, \zeta_{\kappa_1}^2,\zeta_{\kappa_2}^2 \mid \bm{y}, \bm{\varphi}, \bm{k},\bm{Z}_\beta,\bm{Z}_\kappa, \bm{Z}_\rho)
    \; \propto \; l(\bm{\vartheta}_1,\bm{\vartheta}_2, \bm{\vartheta}_\rho)  
       \cdot\\
&p(\bm{\beta}_{\rho} \mid \xi_{\rho}^2)p(\xi_{\rho}^2)\prod_{\ell=1}^2
   p(\varsigma_\ell^2)
    p(\bm{\beta}_{\mu_\ell} \mid \xi_{\mu_\ell}^2)
    p(\xi_{\mu_\ell}^2)
   p(\bm{\gamma}_{\mu_\ell} \mid \bm{\theta}_{\mu_\ell}^2) p(\theta_{\tau_\ell,0})p(\theta_{\kappa_\ell,0})p(\bm{\theta}_{\kappa_\ell,1}\mid \zeta_{\kappa_\ell}^2)
    p(\zeta_{\kappa_\ell}^2), 
\end{align*}
 proportional to the likelihood (Sec. \ref{ss:lik}), and the prior distributions (Sec. \ref{ss:prior}).

We develop a two-stage Bayesian approach based on $\MCMC$ simulations to estimate the model parameters and hyperparameters. This approach fits the marginal models first and, conditional on the fitted marginals, estimates the copula parameter model. This yields substantial computational savings compared with the joint estimation. Previous work \citep[e.g.,][]{Kle2016a} has shown that this two-step procedure does not substantially underestimate uncertainty compared to the joint estimation.
In the first stage, given the latent variables $\bm{k}$, and transforming back the linear one, any standard algorithm for sampling Gaussian process parameters can be used. 
The posterior for $\bm{\beta_{\mu_\ell}}$,$\bm{\gamma}_{\mu_\ell}$ are sampled via Gibbs, since the full conditionals are wrapped normal and normal.
Using the following vector form of the mean predictors for the two margins, 
and the copula predictor, $ \bm{\eta}_{\mu_\ell} = \mathbf{1} \beta_{{\mu_\ell},0} + \bm{Z}_\beta \bm{\beta}_{{\mu_\ell},1} + \bm{\psi} \bm{\gamma}_{\mu_\ell}$, $\bm{\eta}_\rho=\mathbf{1}\beta_{\rho,0}+ \bm{Z}_\rho \bm{\beta}_{\rho,1}$,
we can explicitly write their full conditional distributions as 
\begin{align*}\begin{split}
{\beta_{\mu_\ell,0}}\mid \cdot &
\sim N\left(\left(\frac{n}{\varsigma_\ell^2}+\frac{1}{10}\right)^{-1}\left(\frac{n}{\varsigma_\ell^2}\mathbf{1}^\top( \widetilde{\bm{y}}_\ell -\bm{\eta}_{\mu_\ell}+\bm{1}\beta_{\mu_1,0}
)\right)\,,\, \left(\frac{n}{\varsigma_\ell^2}+\frac{1}{10}\right)^{-1}\right)   \\
{\bm{\beta}_{\mu_\ell,1}}\mid \cdot & 
\sim N\left(\left(\frac{1}{\varsigma_\ell^2} \bm{Z_\beta}^\top\bm{Z_\beta}+ \frac{1}{\xi_{\mu_\ell}^2}\bm{I}\right)^{-1} 
\left( \frac{1}{\varsigma_\ell^2}\bm{Z}_\beta^\top(\widetilde{\bm{y}}_\ell -\bm{\eta}_{\mu_\ell}+\bm{Z}_\beta\beta_{\mu_\ell,1}
)\right) 
\,,\, 
\left(\frac{1}{\varsigma_\ell^2} \bm{Z_\beta}^\top\bm{Z_\beta}+ \frac{1}{\xi_{\mu_\ell}^2}\bm{I}\right)^{-1} \right)  \\
\bm{\gamma}_{\mu_\ell}\mid \cdot & 
\sim N\left( \left(\frac{1}{\varsigma_\ell^2} \bm{\psi}^\top\bm{\psi}+  \bm{Q}(\bm{\theta}_\ell)  \right)^{-1} 
\left( \frac{1}{\varsigma_\ell^2}\bm{\psi}^\top(\widetilde{\bm{y}}_\ell -\bm{\eta}_{\mu_\ell}+\bm{\psi}\gamma_{\mu_\ell,1}
)\right) 
\,,\, 
\left(\frac{1}{\varsigma_\ell^2} \bm{\psi}^\top\bm{\psi}+  \bm{Q}(\bm{\theta}_\ell)  \right)^{-1}  \right) 
\end{split}\end{align*}
\begin{align*}\begin{split}
\varsigma_\ell^2 \mid \cdot &
\sim IG \left(\frac{n}{2}+0.001\,,\, \frac{1}{2} \left(\widetilde{\bm{y}}_\ell-\bm{\eta}_{\mu_\ell}\right)^\top \left(\widetilde{\bm{y}}_\ell-\bm{\eta}_{\mu_\ell}\right) +0.001 \right)\\
\xi_{\mu_\ell}^2 \mid \cdot &
\sim IG \left(\frac{p_{z_\beta}}{2}+0.001\,,\, \frac{1}{2} \bm{\beta}_{\mu_\ell,1}^\top \bm{\beta}_{\mu_\ell,1} +0.001 \right)\\
\xi_{\rho}^2 \mid \cdot &
\sim IG \left(\frac{p_{z_\rho}}{2}+0.001\,,\, \frac{1}{2} \bm{\beta}_{\rho,1}^\top \bm{\beta}_{\rho,1} +0.001 \right)
\end{split}\end{align*}
where $\widetilde{\bm{y}}_1=\bm{\varphi}_1+2\pi\bm{k}$, $\widetilde{\bm{y}}_2=\log(\bm{y_2})$ and $p_{z_\beta}$, $p_{z_\rho}$ are the covariates number in the mean and in the copula, respectively. See SM \ref{apx:fullCon} for calculation details.
The identifiability challenges of covariance parameters in GRFs are well documented \citep{Tan2021}. These issues arise in our setting as well, and since the components of $\thetavec_\ell$ exhibit strong posterior correlation, they are updated jointly in a single Metropolis–Hastings block, using the Robust Adaptive Metropolis algorithm \citep[] []{Vih2012}  with t-Student proposals. This method adaptively estimates the shape of the target distribution while enforcing a desired acceptance rate. We follow the recommendations of $23.4\%$ as the acceptance rate in multidimensional targets and $ 44\%$ in unidimensional targets \citep{Gel1997} for the $\GMRF$ smoothing variances $\zeta_{\kappa_\ell}^2$. To avoid negative invalid proposals, we reparameterize on the log-scale and approximate the log-full conditional $\log p(\log \zeta_{\kappa_\ell}^2\mid \cdot)$, rather than $\log p(\zeta_{\kappa_\ell}^2 \mid \cdot)$.
Each latent winding number $k(\bm{s}_i)$ is updated at each iteration using a separate Metropolis step. 
At the $t$-th iteration, given a current state $k^{[t]}(\bm{s_i})$, the proposal is drawn uniformly from $\big\{ k^{[t]}(\bm{s}_i)-1 \,,k^{[t]}(\bm{s}_i),\,k^{[t]}(\bm{s}_i)+1 \big\}$. Considering the prior choice in Sec.~\ref{ss:prior}, the initial values are set as $k^{(0)}(\bm{s}_i)=0$ for all $\bm{s}_i \in \mathcal{D}$.

In the second stage, we work with the estimated copula data $\bm{u}_1=(u_{1}(\svec_1),\ldots, u_{1}(\svec_n))^\top$, and $\bm{u}_2=(u_{2}(\svec_1), \ldots, u_{2}(\svec_n))^\top$ defined as $u_{\ell}(\svec_i)=F_{\ell}(y_{\ell}(\svec_i)\mid \widehat{\varthetavec}_\ell)$, $i=1,\ldots,n$, $\ell=1,2$,  where $\widehat\varthetavec_\ell$ are the posterior mean estimates of the %
parameters for margin $\ell$. For sampling the posterior
copula regression coefficients, we rely on Metropolis-Hasting steps with Iteratively Weighted Least Squares (\IWLS) proposal densities \citep{Kle2015}. The regression coefficients $\boldsymbol{\beta}_{\rho}^{[t]}$ at the $t$-th iteration are proposed from  $q\left(\bm{\beta}_{\rho}^* \mid \bm{\beta}_{\rho}^{[t]} \right)~=~ \mathcal{N}\left(\mu^{[t]} \; , \; {P^{-1}}^{[t]}  \right)$,~with
$$
P^{[t]}=\bm{Z}_\rho^\top W^{[t]} \bm{Z}_\rho + \frac{1}{\xi_{\rho}^2} \bm{I}, \text{ and } \mu^{[t]}=
({P^{[t]}})^{-1} \bm{Z}_\rho^\top W^{[t]} \left( \bm{Z}_\rho\bm{\beta}_\rho^{[t]} +(W^{[t]})^{-1}\bm{v}^{[t]}  
\right),
$$
where $\bm{W}^{[t]}$ is a working weight matrix having the negative second derivatives of the log-likelihood $\log(l)$ with respect to the predictor $\bm{\eta}_{\rho}$ on the diagonal, i.e., $w_{ii}= - \frac{\partial^2 }{(\partial \eta_i )^2}^{[t]} \log(l)$ and zeros otherwise; and $\bm{v}^{[t]}=\frac{\partial }{\partial \bm{\eta}}^{[t]} \log(l)$ is the score vector. The working weights and score vector are determined by the chosen copula distribution, thereby ensuring automatic adaptation to the form of the full conditional and eliminating the need for manual tuning.

\subsection{Model choice}
For model comparison, we rely on the deviance information criterion ($\DIC$) and the Watanabe–Akaike information criterion ($\WAIC$), both of which approximate out-of-sample predictive accuracy and are asymptotically equivalent to leave-one-out cross-validation (see \citet{Spi2002,Wat2010}. Let $\bm{\vartheta}^{[1]}, \ldots, \bm{\vartheta}^{[T]}$ denote the full parameter vectors sampled via $\MCMC$ from the posterior, the $\DIC$ is given by 
$\DIC=
\frac{2}{T} \sum_t D(\bm{\vartheta}^{[t]})-D(\frac{1}{T} \, \sum_t\bm{\vartheta}^{[t]})$ where $D(\bm{\vartheta}) = -2 \log p(y \mid \bm{\vartheta}).
$ The $\WAIC$ is given by  $\WAIC = -2 \, (\text{lppd} - p_{\text{WAIC}})$, where $\text{lppd} = \sum_{i=1}^n \log \int p(y_i \mid \bm{\vartheta}) \, p(\bm{\vartheta} \mid y) \, d\bm{\vartheta}$ and $ p_{\text{WAIC}} = \sum_{i=1}^n 2 \, \text{var}_{\bm{\vartheta}}\!\big( \log p(y_i \mid \bm{\vartheta}) \big)$ \citep[see][]{Gel2013,Veh2017}. More details are provided in the SM \ref{s:evmet}.

In our application, we use both $\DIC$ and $\WAIC$ to inform model choice, striking a balance between predictive accuracy and parsimony. Thanks to the limited number of covariates, in the application in Sec.~\ref{s:appl}, we explore all possible combinations of candidate predictors.

\section{Simulation study}
\label{s:sim}
We conduct a simulation study to assess the ability of the proposed methodology to identify tail dependence and covariate effects in the copula parameter. Specifically, we compare the performance of three copula families, Gaussian, Clayton, and Gumbel, representing the absence of tail dependence, lower-tail~dependence, and upper-tail dependence, respectively, and both, covariate-dependent and fixed. 

\paragraph{Simulation design}
We consider six scenarios, defined by combinations of copula families (Gaussian, Clayton, or Gumbel) and dependence specifications (constant or covariate-dependent copula parameters). The marginal specifications follow those in Sec.~\ref{s:model}. For each setting, we consider sample sizes \( n = 250, 500, 750 \) and perform \( R = 100 \) replications. Spatial locations are sampled uniformly in \( \mathcal{D} = [0,1]^2 \), using a fixed mesh with \( M = 703 \) nodes. Covariates are defined as \( z_\beta(\svec) = 2\sin(2\pi s_1)\sin(4\pi s_2) \) for marginal means and \( z_\kappa(\svec) = 1/2 + \sin(2\pi s_1)\cos(4\pi s_2) \) for covariance components. In the \emph{constant dependence} case, we set \( \eta_\rho = 0.577 \), yielding a target correlation parameter \(\rho \approx 0.5 \) for the Gaussian copula, \(\rho \approx 1.781 \) for the Clayton copula and \(\rho \approx 2.781 \) for the Gumbel copula. 
In the \emph{covariate-dependent} scenario, we define \( \eta_\rho(\svec) \approx 0.577 - 0.374\, z_\rho(\svec) \), where \( z_\rho(\svec) = \sin(4 s_2 + s_1) - \frac{1}{2} \exp(-64 s_1^2) \). The covariate surfaces are illustrated in Figure~\ref{fig:cov} of the SM.

Model estimation is conducted using a $\MCMC$ sampler with 15{,}000 iterations, burn-in of 7{,}000, and a thinning factor of 8.   Each dataset is fitted using the true copula and the two alternatives. The $\MCMC$ sampler
is implemented in \textsf{R}. For the mesh construction and \(\GMRF\) precision matrix calculations, \textsf{R-INLA} \citep{Bak2018} was used.

\paragraph{Results}
In the SM \ref{apx:sim}, we report in the Table \ref{tab:perc_DIC} the percentages of each selected model under different copula, sample size and type of dependence specification, while Figures~\ref{fig:DIC_N}, \ref{fig:DIC_C} and \ref{fig:DIC_G} in SM~\ref{apx:sim}, report the \(\DIC\) values of the correctly specified model against its competitors. The corresponding $\WAIC$ results are not too far from $\DIC$, and are shown . Points above the diagonal indicate a preference for the true model.

 If the Gaussian copula is the correct model (Figure \ref{fig:DIC_N}), the \(\DIC\) is able to select the true copula for all sample sizes for varying values of the correlation parameter. However, for constant correlation, it is more difficult for the DIC to decide between the Clayton and the Gaussian copula, as in some replications
the Gumbel model yields smaller $\DIC$ values compared to the true model.
For increasing sample sizes, the problem vanishes.
      
If the Clayton copula is the correct model (Figure \ref{fig:DIC_C}), the \(\DIC\) consistently favors the true copula under both constant and varying dependence, with robustness across all sample sizes. The greatest differences in favor of the correct model are observed when the competitor is the Gumbel copula, which is easily explainable due to their different tail dependencies.
    
If the Gumbel copula is the correct model (Figure \ref{fig:DIC_G}), even at smaller sample sizes, the \(\DIC\) reliably selects the correct model. For a small sample size $n=250$, only in a few replications for varying $\rho$, the Gaussian copula yields slightly smaller $\DIC$ values. Differences are larger when the Clayton copula is incorrectly assumed, because of its different tail dependencies.
    
In conclusion, the \(\DIC\) and $\WAIC$ (SM \ref{apx:sim}) are generally effective in identifying the appropriate copula family. The asymmetric copulas, such as Gumbel and Clayton, are more easily distinguished than the Gaussian copula due to their tail properties. The latter is slightly more challenging to identify in scenarios with constant or low correlation and 
small sample sizes.


\section{Application to wind behaviour in Germany}
\label{s:appl}

We apply the proposed PWC model to wind data in Germany. Wind direction, measured in radians on $[0,2\pi)$, indicates the angle of origin. 
Wind speed, recorded in m/s, typically ranges in $(0,31.5]$, under non-hurricane conditions, according to the Beaufort Wind Scale. German wind patterns are mainly shaped by prevailing westerlies from the Atlantic Ocean, with additional easterly currents entering from Poland and the Czech Republic. Their interaction generates variability in northern Germany, whereas the complex orography in central and southern regions (e.g., the Harz Mountains, Black Forest, Bavarian Alps) produces local wind systems and turbulence. Coastal and northern plains typically exhibit higher wind speeds, reflecting the spatial constraints imposed by geography. Given the periodic and spatially correlated nature of the data, we apply the PWC model to analyze wind direction and speed jointly. This approach captures conditional dependence, incorporates covariates in the entire distribution, and accounts for spatial effects via nonstationary Matérn~fields.

\paragraph{Data}
The data used for this application are obtained from the Deutscher Wetterdienst (DWD), and consist of hourly mean wind observations from $289$ weather stations distributed across Germany. This dense network spans diverse geographic settings, enabling the assessment of both large-scale atmospheric dynamics and local effects. Alongside wind speed and direction, additional covariates are available, including visibility range, soil temperature (at a 5cm depth), vapor pressure, air pressure, wet bulb temperature, air temperature (at 2m), relative humidity, lower boundary height, and station altitude. These variables provide valuable explanatory information for structured additive predictors in our model. During storm periods, wind directions tend to be relatively homogeneous across the domain, whereas in calm weather they exhibit greater variability and more frequent directional shifts.
We focus on a stormy weather episode between January 24th and 29th, 2025, and consider, for each station, the circular mean wind direction and the mean wind speed derived from the hourly data. 
As shown in Figure~\ref{f:wind_jan}, the average wind direction was predominantly from the south and west, with a few locations across the country exhibiting mean wind speeds above $10$~m/s. The wind rose confirms the dominance of westerly winds, with limited directional variability. 
These patterns are consistent with the passage of large-scale Atlantic frontal systems, which typically generate strong, persistent westerly flows across Central Europe during storm events.
\begin{figure}[ht]
  \centering
  \begin{minipage}[c]{0.48\textwidth}
    \includegraphics[
    width=\linewidth]{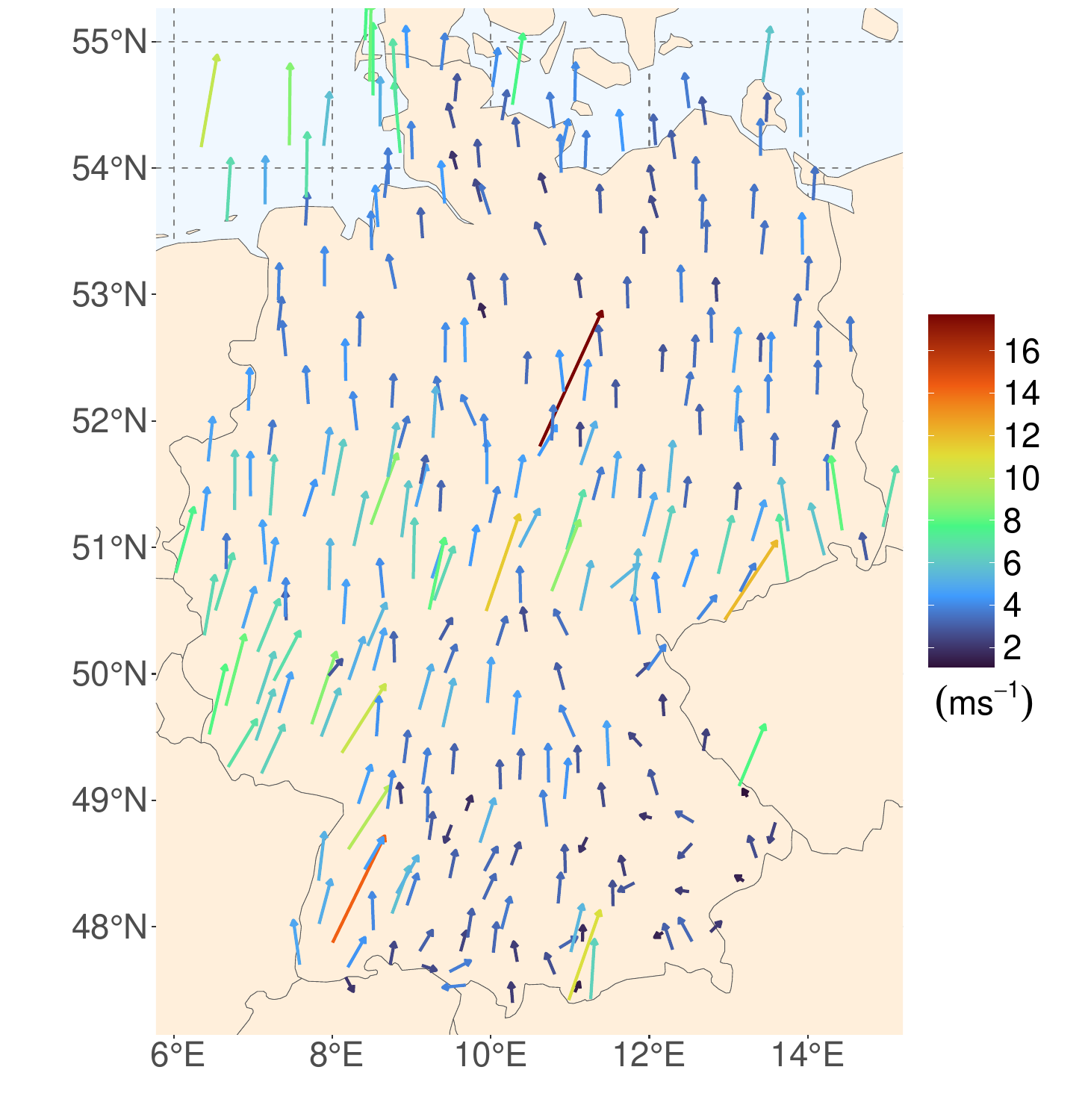}
  \end{minipage}\hfill
  \begin{minipage}[c]{0.48\textwidth}
    \centering
    \includegraphics[
    width=0.9\linewidth]{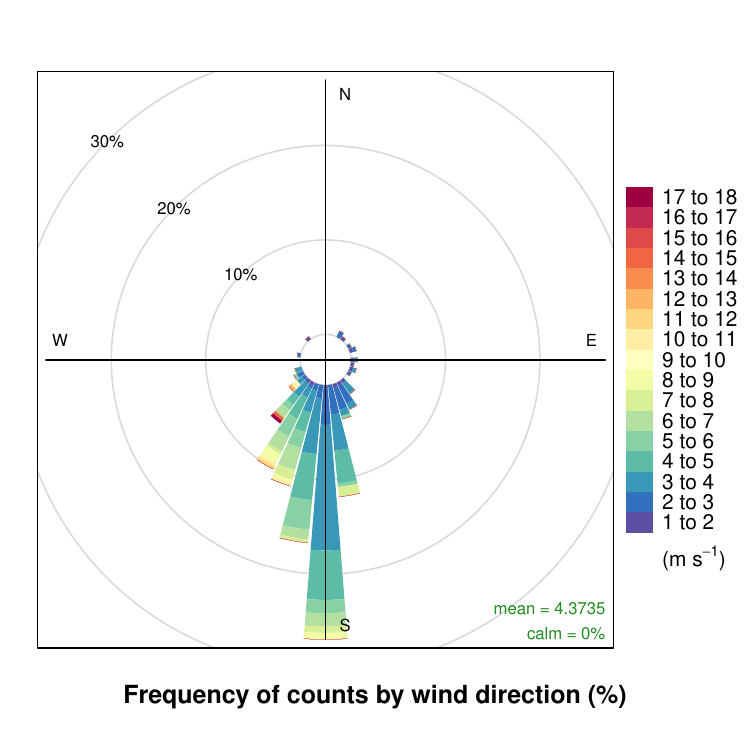}
  \end{minipage}
  \caption{Wind behaviour in Germany during a storm weather period (24-–29 January 2025). Left: station-wise average wind direction and mean wind speed (m/s), with arrow orientation indicating direction and colour scale representing speed. Right: wind rose summarizing the frequency distribution of directions the winds blew from across all stations, with colours denoting wind speed classes.}
  \label{f:wind_jan}
\end{figure}
\paragraph{Model specification}
To investigate the relation between wind direction and speed for these data, we fit our PWC model with wrapped normal and lognormal margins as in \eqref{eq:wn} and \eqref{eq:ln}, by initially including all the covariates for both the two marginal models and for the copula. In Figure \ref{fig:qqplot}  of the SM, we show randomized quantile residuals via the inverse CDF of a standard normal distribution, suggested by \citet{Kle2015} as a simple and effective diagnostic to evaluate the fit of the marginal distributions. If the models are correctly specified, the residuals should follow approximately a standard normal distribution in Figure \ref{fig:qqplot} of the SM.

For wind speed, the log-normal model estimated in the first stage provides an excellent fit to the data. For the circular variable wind direction, the wrapped normal distribution offers a reasonable fit, although it shows some difficulty in capturing the overall shape of the distribution, resulting in slight sigmoidal deviations from the diagonal.
Nevertheless, note that the reported values for the circular distribution are only an approximation of the estimated PDF, for specific values of winding numbers, and the tails could be affected by this approximation.
Thus, we retain the wrapped normal margin, recalling that our focus lies on the benefits of including a covariate-dependent dependence structure into the model rather than assuming independent marginals, and on exploring alternative forms of dependence beyond the symmetric dependence and linear correlation.

Selection of covariates in the marginal models is guided by the $\DIC$ and $\WAIC$, comparing different predictor specifications \citep{Kle2016a}. In the second stage, we analogously examine the role of covariates in the dependence structure induced by the copula. Specifically, for each copula family, constant and covariate-dependent specifications for the copula parameter are compared using $\DIC$ and $\WAIC$.
In particular, we compare the Gaussian copula (N), which exhibits no tail dependence, with the Clayton copula (C), which captures lower-tail dependence, and the Gumbel copula (G), which captures upper-tail dependence. In addition, we assess the sensitivity of dependence modeling under possible misspecification.

\paragraph{Predictive performance}
To further compare competing copula specifications in terms of copula family and in terms of constant or covariate-dependent copula parameter,
We assess predictive performances for the models after variable selection based on $\DIC$ and $\WAIC$.
We use proper scoring rules based on ten-fold cross-validation, with observations randomly assigned to folds. 
We consider the logarithmic score (nLS), the energy score (ES), and a cylindrical definition of the continuous ranked probability score (\cylCRPS), which extends the ES to the cylindrical domain. Lower scores indicate higher predictive accuracy.

\subsection{Results}

\paragraph{Selecting the dependence structure}
The model comparison results in Table~\ref{tab:eval} confirm that accounting for dependence substantially improves the model fit. Both $\DIC$ and $\WAIC$ favor copula-based models over the independence assumption, with further improvement when covariate effects are introduced in the copula parameter. Probabilistic forecasts are assessed through the scoring rules $\nLS$, $\ES$, and $\cylCRPS$. In line with the criteria, including covariates in the dependence structure, the model's scores decrease, thereby delivering a better forecast. 
The Gumbel copula with covariate-dependent parameter (\texttt{G1}) yields the lowest information criteria and best predictive scores across all metrics, indicating the presence of upper-tail dependence between wind direction and speed in this specific storm episode. The initially observed lower-tail pattern under the Clayton copula disappears once covariates are included, suggesting that it was primarily explained by the effects of meteorological covariates rather than true lower-tail asymmetry. Consequently, the subsequent results presented are based on the best performing model G1. 

\begin{table}[htbp]
 \caption{Model comparison based on information criteria (computed based on the complete dataset, using the selected covariates) and average predictive scores (based on ten-fold cross-validation) for the independent model (I), copula models with constant parameter (0), and covariate-dependent parameter (1).  Lower values indicate better performance.
}
 \label{tab:eval}
\centering
 \def\~{\hphantom{0}}
 \begin{tabular}{lccccccc}
   \hline
   Model    & DIC & WAIC & nLS & ES& $\mbox{CRPS}_{\mathit{cyl}}$ & AS    & RMSE  \\ 
   \hline
   I  & 218.460 & 223.390 & 2.773 & 1.211 & 0.440 & 0.183 & 1.222 \\
   N0 & 208.080 & 214.490 & 2.821 & 1.211 & 0.493 & 0.183 & 1.223 \\
   C0 & 206.890 & 212.450 & 2.773 & 1.211 & 0.493 & 0.183 & 1.223 \\
   G0 & 210.830 & 217.500 & 2.802 & 1.211 & 0.493 & 0.182 & 1.223 \\
   N1 & 197.550 & 203.772 & 2.793 & 1.181 & 0.478 & 0.172 & 1.202 \\
   C1 & 195.172 & 200.085 & 2.723 & 1.203 & 0.480 & 0.173 & 1.220 \\
   G1 & \textbf{190.656} & \textbf{195.691} & \textbf{2.532} & \textbf{1.041} & \textbf{0.391} & \textbf{0.097} & \textbf{1.088} \\
   \hline
 \end{tabular}
\end{table}

\paragraph{Estimated posterior effects}
Posterior mean estimates and $95\%$ credible intervals for all G1 model coefficients are reported in Table~\ref{tab:phi-y-rho-summary}.
\begin{table}[ht!]
\centering
\caption{Posterior means and 95\% credible intervals (in square brackets) for the marginal and copula parameters in G1 model.}
\begin{tabular}{lccc}
\toprule
 & $\vartheta_1$ (direction) & $\vartheta_2$ (speed) & $\vartheta_\rho$ (dependence) \\
\midrule
\multirow{2}{*}{Nugget variance} & $0.047$ & $0.072$ & \multicolumn{1}{c}{---} \\
 & $[0.029,\ 0.073]$ & $[0.058,\ 0.088]$ & \multicolumn{1}{c}{---} \\
\midrule
\multirow{2}{*}{Intercept} & $3.216$  & $1.389$  & $-1.560$ \\
 & $[3.192,\ 3.242]$ & $[1.358,\ 1.420]$ & $[-2.292,\ -1.005]$ \\
\multirow{2}{*}{Visibility range} & --- & $-0.125$ & $1.212$ \\
 & --- & $[-0.173,\ -0.077]$ & $[0.308,\ 1.951]$ \\
\multirow{2}{*}{Soil temperature (5 cm)} & $0.011$  & $0.277$  & --- \\
 & $[-0.040,\ 0.063]$ & $[0.200,\ 0.348]$ & --- \\
\multirow{2}{*}{Vapor pressure} & --- & $-0.195$ & --- \\
 & --- & $[-0.337,\ -0.052]$ & --- \\
\multirow{2}{*}{Air pressure} & $-0.180$ & $-0.342$ & $2.375$ \\
 & $[-0.244,\ -0.116]$ & $[-0.399,\ -0.283]$ & $[1.316,\ 3.464]$ \\
\multirow{2}{*}{Air temperature (2 m)} & $0.192$  & $0.292$  & --- \\
 & $[0.107,\ 0.271]$ & $[0.132,\ 0.449]$ & --- \\
\multirow{2}{*}{Relative humidity} & $-0.049$ & $0.211$  & --- \\
 & $[-0.113,\ 0.012]$ & $[0.085,\ 0.337]$ & --- \\
\multirow{2}{*}{Lower boundary height} & $-0.182$ & --- & $-0.915$ \\
 & $[-0.221,\ -0.146]$ & --- & $[-1.492,\ -0.318]$ \\
\multirow{2}{*}{Wet-bulb temperature} & --- & --- & $-1.713$ \\
 & --- & --- & $[-2.811,\ -0.489]$ \\
\midrule
\multicolumn{4}{c}{GMRF log-linear coefficients} \\
\midrule
\multirow{2}{*}{Intercept $\theta_{\tau,0}$} & $-6.543$ & $-2.377$ & --- \\
 & $[-7.397,\ -5.995]$ & $[-2.870,\ -1.779]$ & --- \\
\multirow{2}{*}{Intercept $\theta_{\kappa,0}$} & $5.074$  & $2.595$  & --- \\
 & $[4.674,\ 5.572]$ & $[1.783,\ 3.193]$ & --- \\
\multirow{2}{*}{Altitude $\theta_{\kappa,1}$} & $-0.914$ & $0.047$  & --- \\
 & $[-1.235,\ -0.665]$ & $[-0.256,\ 0.332]$ & --- \\
\bottomrule
\end{tabular}
\label{tab:phi-y-rho-summary}
\end{table}
The coefficients show consistent meteorological patterns: higher air pressure is associated with lower wind speeds and altered wind directions, while higher air temperatures and relative humidity tend to increase wind speeds. The copula parameter $\rho$ shows a positive association with air pressure and visual range, and a negative association with wet-bulb temperature and lower cloud boundary height, indicating stronger dependence under clearer and more stable atmospheric conditions. Spatial random-field parameters (bottom part of the table) reveal marked non-stationarity for the directional component.

\paragraph{Results for spatial prediction}
Spatial prediction results are illustrated in Figure~\ref{fig:pred_wind}. The left panel displays the Cartesian product of the marginal credible intervals (here corresponding to circular sectors) for predicted wind vectors during a storm event at test set locations.
 The joint credible intervals are strictly included in them. The rainbow arrows represent observed wind direction and speed, while the purple and violet arrows denote the $0.025$ and $0.975$ posterior quantiles, respectively. Most observed wind vectors lie within the predicted intervals, demonstrating reliable uncertainty quantification and good calibration. The model successfully captures the wind dynamics across northern and central Germany. In contrast, predictive performance deteriorates in the southern region, particularly near the Alps, where abrupt changes in wind patterns and orographic effects are more complex to capture with the current mesh resolution. A finer mesh in these areas would improve local spatial representation and reduce predictive bias.
The right panel of Figure~\ref{fig:pred_wind} shows posterior mean estimates of the copula parameter $\rho$, which exhibits a clear spatial gradient, with higher values in the north indicating stronger dependence between direction and speed. In comparison, lower values in the mountainous south reflect weaker or more variable associations, consistent with the increased turbulence typical of complex terrain.

\begin{figure}[h!]
    \centering
 \begin{minipage}[c]{0.48\textwidth}    
\includegraphics[width=0.95\linewidth]{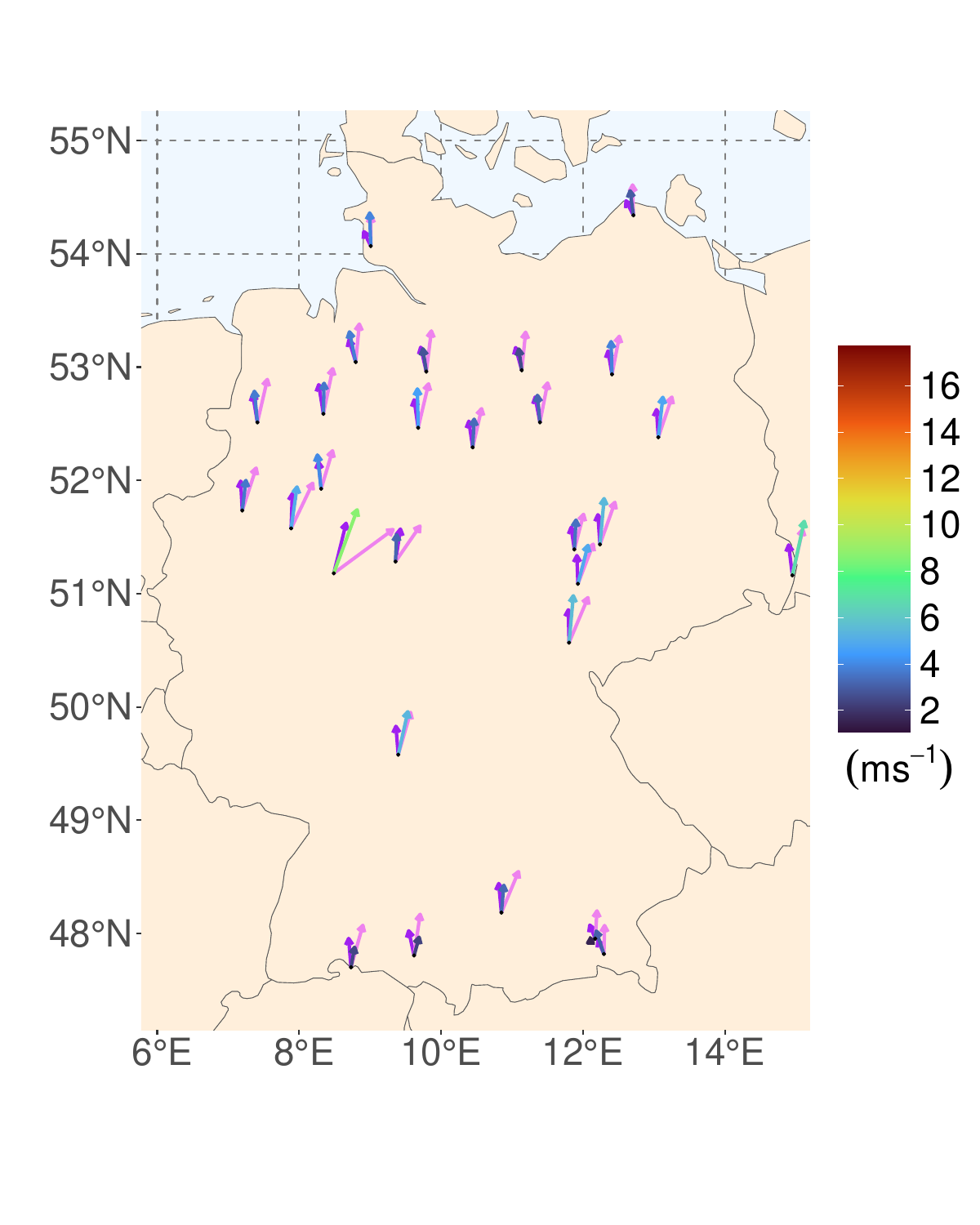}
\end{minipage}
 \begin{minipage}[c]{0.48\textwidth}
\includegraphics[width=0.9\linewidth]{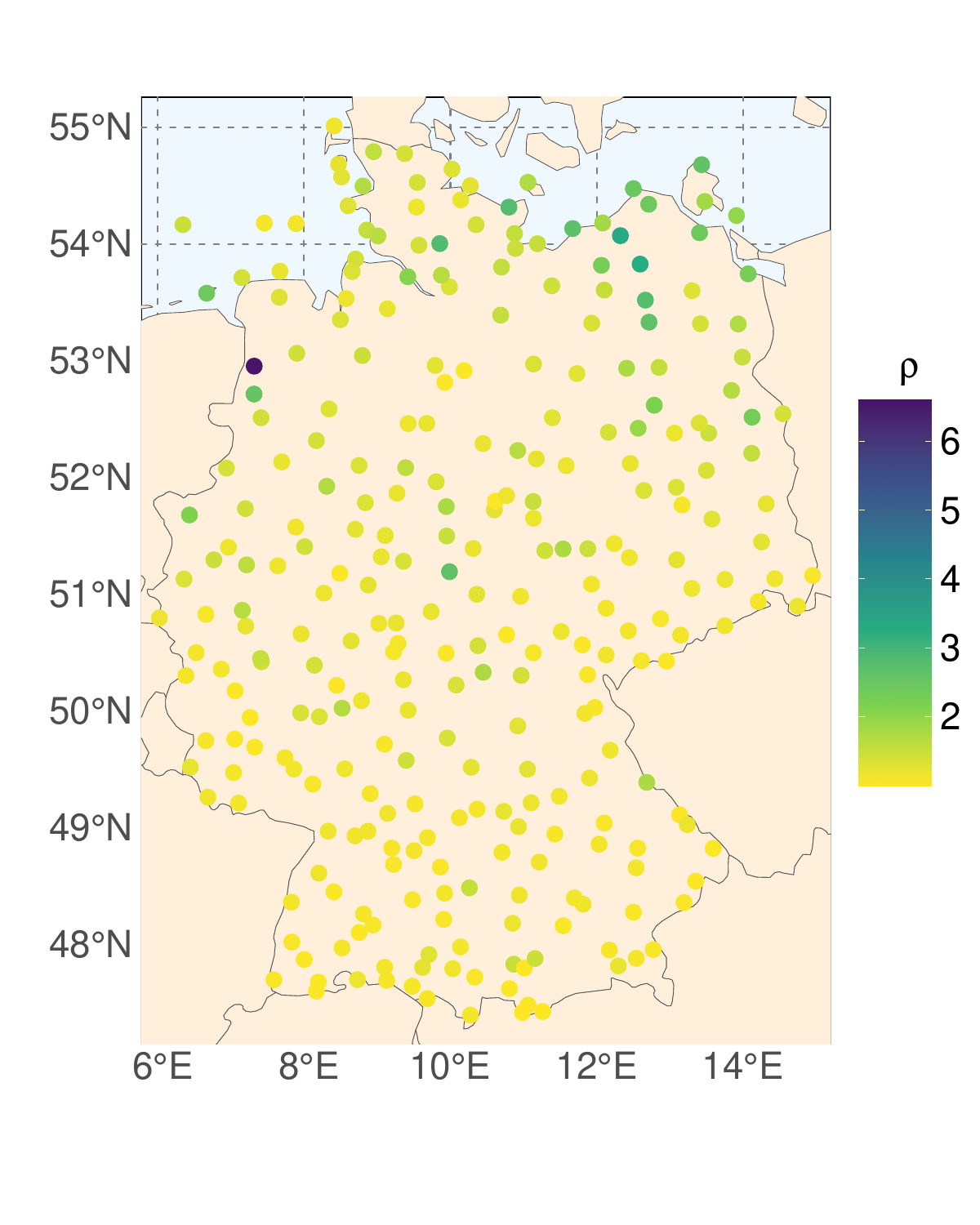}
 \end{minipage}
    \caption{Left: credible intervals for wind vectors at test locations during a storm period. The rainbow arrows indicate the observed wind direction and speed, while the purple and violet arrows denote the $0.025$ and $0.975$ posterior quantiles, respectively. Right: posterior mean estimates of the copula parameter $\rho$ at all locations.}
    \label{fig:pred_wind}
\end{figure}

\section{Conclusion and outlook}
\label{s:discuss}

We proposed the PWC model, a generalization of copula-based models for directional statistics that allows for flexible specification of the linear components and recovers the cylindrical joint model through a wrapping construction. The PWC model accommodates spatially correlated data by defining marginal models driven by non-stationary latent GRFs, and introduces a conditional copula regression framework linking the specified marginals. Additionally, we investigated the role of covariates in the dependence structure, 
showing that accounting for environmental variables in the association between wind speed and wind direction can substantially improve model~fit.

Several extensions of the proposed framework are possible. For the marginal distributions, more flexible formulations may be adopted, such as the wrapped skew-normal \citep{Mas2016wsgp} or wrapped mixtures \citep{
Gre2023} for the circular component, and the generalized gamma 
or Dagum 
distributions for the linear component. Furthermore, the copula governing the dependence structure could be made location-specific or replaced with alternatives based on trigonometric functions or semi-parametric formulations \citep{Kau2013,Kus2025} to capture more complex forms of association.
Future work will focus on extending the model to a fully spatio-temporal framework, allowing dynamic dependence between circular and linear processes to be modeled over space and time.

\bibliographystyle{chicago} 
\bibliography{bib.bib}

@article{Aca2011,
  title={Dependence calibration in conditional copulas: a nonparametric approach},
  author={Acar, E. F. and Craiu, Radu V. and Yao, F.},
  journal={Biometrics},
  volume={67},
  number={2},
  pages={445--453},
  year={2011},
  publisher={Oxford University Press}
}

@article{Bak2018,
  author  = {Bakka, H. and Rue, H. and Fuglstad, G.-A. and Riebler, A. and Bolin, D. and Illian, J. and Krainski, E. and Simpson, D. and Lindgren, F.},
  title   = {Spatial modeling with {R-INLA}: a review},
  journal = {Wiley Interdisciplinary Reviews: Computational Statistics},
  year    = {2018},
  volume  = {10},
  number  = {6},
  pages   = {e1443}
}

@article{Bol2019,
  author  = {Bolin, D. and Kirchner, K.},
  title   = {The rational {SPDE} approach for {G}aussian random fields with general smoothness},
  journal = {Journal of Computational and Graphical Statistics},
  year    = {2020},
  volume  = {29},
  number  = {2},
  pages   = {274--285}
}

@article{Car2009,
  author  = {Carta, J. A. and Ramirez, P. and Velazquez, S.},
  title   = {A review of wind speed probability distributions used in wind energy analysis: case studies in the {C}anary {I}slands},
  journal = {Renewable and Sustainable Energy Reviews},
  year    = {2009},
  volume  = {13},
  number  = {5},
  pages   = {933--955}
}

@article{Dur2007,
    author = {Fernández-Durán, J. J.},
    title = {Models for Circular–Linear and Circular–Circular Data Constructed from Circular Distributions Based on Nonnegative Trigonometric Sums},
    journal = {Biometrics},
    volume = {63},
    number = {2},
    pages = {579-585},
    year = {2007},
    month = {01}
}

@Book{Fah2022,
  author    = {Fahrmeir, L. and Kneib, T. and Lang, S. and Marx, B. D.},
  title     = {Regression: Models, Methods and Applications},
  publisher = {Springer},
  address   = {Berlin Heidelberg},
  year      = {2022}
}

@article{Gel1997,
  author  = {Gelman, A. and Gilks, W. R. and Roberts, G. O.},
  title   = {Weak convergence and optimal scaling of random walk {M}etropolis algorithms},
  journal = {Annals of Applied Probability},
  year    = {1997},
  volume  = {7},
  number  = {1},
  pages   = {110--120}
}

@Book{Gel2013,
  author    = {Gelman, A. and Stern, H. S. and Carlin, J. B. and Dunson, D. B. and Vehtari, A. and Rubin, D. B.},
  title     = {Bayesian Data Analysis},
  edition   = {3rd},
  publisher = {Chapman \& Hall/CRC},
  address   = {Boca Raton, FL},
  year      = {2013}
}

@article{Gre2023,
  title={Finite mixtures of multivariate wrapped normal distributions for model based clustering of p-torus data},
  author={Greco, L. and Novi Inverardi, P. L. and Agostinelli, C.},
  journal={Journal of Computational and Graphical Statistics},
  volume={32},
  number={3},
  pages={1215--1228},
  year={2023},
  publisher={Taylor \& Francis}
}

@article{Hod2022,
  title={Circular--linear copulae for animal movement data},
  author={Hodel, F. H. and Fieberg, J. R.},
  journal={Methods in Ecology and Evolution},
  volume={13},
  number={5},
  pages={1001--1013},
  year={2022},
  publisher={Wiley Online Library}
}

@article{Ing2015,
  author  = {Ingebrigtsen, R. and Lindgren, F. and Steinsland, I. and Martino, S.},
  title   = {Estimation of a non-stationary model for annual precipitation in southern {N}orway using replicates of the spatial field},
  journal = {Spatial Statistics},
  year    = {2015},
  volume  = {14},
  pages   = {338--364}
}

@article{Joh1978,
  title={Some angular-linear distributions and related regression models},
  author={Johnson, R. A. and Wehrly, T. E.},
  journal={Journal of the American Statistical Association},
  volume={73},
  number={363},
  pages={602--606},
  year={1978},
  publisher={Taylor \& Francis}
}

@article{Jon2015,
  title={On a class of circulas: copulas for circular distributions},
  author={Jones, M. C. and Pewsey, A. and Kato, S.},
  journal={Annals of the Institute of Statistical Mathematics},
  volume={67},
  number={5},
  pages={843--862},
  year={2015},
  publisher={Springer}
}

@article{Kau2013,
  title={Flexible copula density estimation with penalized hierarchical {B}-splines},
  author={Kauermann, G. and Schellhase, C. and Ruppert, D.},
  journal={Scandinavian Journal of Statistics},
  volume={40},
  number={4},
  pages={685--705},
  year={2013},
  publisher={Wiley Online Library}
}

@article{Kle2015,
  author  = {Klein, N. and Kneib, T. and Lang, S. and Sohn, A.},
  title   = {Bayesian structured additive distributional regression with an application to regional income inequality in {G}ermany},
  journal = {The Annals of Applied Statistics},
  year    = {2015},
  volume  = {9},
  number  = {2},
  pages   = {1024--1052}
}

@article{Kle2016a,
  author  = {Klein, N. and Kneib, T.},
  title   = {Simultaneous inference in structured additive conditional copula regression models: a unifying {B}ayesian approach},
  journal = {Statistics and Computing},
  year    = {2016},
  volume  = {26},
  pages   = {841--860}
}

@article{Kle2016b,
  author  = {Klein, N. and Kneib, T.},
  title   = {Scale-dependent priors for variance parameters in structured additive distributional regression},
  journal = {Bayesian Analysis},
  year    = {2016},
  volume  = {11},
  pages   = {1071--1106}
}

@InProceedings{Kur2014b,
  author    = {Kurz, G. and Gilitschenski, I. and Hanebeck, U. D.},
  title     = {Efficient evaluation of the probability density function of a wrapped normal distribution},
  booktitle = {2014 Sensor Data Fusion: Trends, Solutions, Applications (SDF)},
  publisher = {IEEE},
  year      = {2014},
  pages     = {1--5}
}

@article{Kus2025,
  title={Grid-Uniform Copulas and Rectangle Exchanges: {B}ayesian Model and Inference for a Rich Class of Copula Functions},
  author={Kuschinski, N. and Jara, A.},
  journal={Bayesian Analysis},
  volume={20},
  number={1},
  pages={55--82},
  year={2025},
  publisher={International Society for Bayesian Analysis}
}

@incollection{Lag2018,
  title={Correlated cylindrical data},
  author={Lagona, Francesco},
  booktitle={Applied Directional Statistics},
  pages={61--76},
  year={2018},
  publisher={Chapman and Hall/CRC}
}

@article{Lag2019,
  title={Copula-based segmentation of cylindrical time series},
  author={Lagona, Francesco},
  journal={Statistics \& Probability Letters},
  volume={144},
  pages={16--22},
  year={2019},
  publisher={Elsevier}
}

@article{Lag2025,
  title={Nonhomogeneous hidden semi-{M}arkov models for toroidal data},
  author={Lagona, F. and Mingione, M.},
  journal={Journal of the Royal Statistical Society Series C: Applied Statistics},
  volume={74},
  number={1},
  pages={142--166},
  year={2025},
  publisher={Oxford University Press UK}
}

@article{Lan2019,
  author  = {Lang, M. N. and Mayr, G. J. and Stauffer, R. and Zeileis, A.},
  title   = {Bivariate {G}aussian models for wind vectors in a distributional regression framework},
  journal = {Advances in Statistical Climatology, Meteorology and Oceanography},
  year    = {2019},
  volume  = {5},
  number  = {2},
  pages   = {115--132}
}

@article{Jon2012,
  author  = {Jona-Lasinio, G. and Gelfand, A. and Jona-Lasinio, M.},
  title   = {Spatial analysis of wave direction data using wrapped {G}aussian processes},
  journal = {The Annals of Applied Statistics},
  year    = {2012},
  volume  = {6},
  number  = {4},
  pages   = {1478--1498}
}

@article{Jon2020,
  author  = {Jona-Lasinio, G. and Santoro, M. and Mastrantonio, G.},
  title   = {CircSpaceTime: an {R} package for spatial and spatio-temporal modelling of circular data},
  journal = {Journal of Statistical Computation and Simulation},
  year    = {2020},
  volume  = {90},
  number  = {7},
  pages   = {1315--1345}
}

@article{Lin2011,
  author  = {Lindgren, F. and Rue, H. and Lindstr{\"o}m, J.},
  title   = {An explicit link between {G}aussian fields and {G}aussian {M}arkov random fields: the stochastic partial differential equation approach},
  journal = {Journal of the Royal Statistical Society: Series B},
  year    = {2011},
  volume  = {73},
  number  = {4},
  pages   = {423--498}
}

@article{Mas2016wsgp,
  author  = {Mastrantonio, G. and Gelfand, A. E. and Jona-Lasinio, G.},
  title   = {The wrapped skew {G}aussian process for analyzing spatio-temporal data},
  journal = {Stochastic Environmental Research and Risk Assessment},
  year    = {2016},
  volume  = {30},
  number  = {8},
  pages   = {2231--2242}
}

@Book{Mar2000,
  author    = {Mardia, K. V. and Jupp, P. E.},
  title     = {Directional Statistics},
  publisher = {John Wiley \& Sons},
  address   = {Chichester},
  year      = {2000}
}

@article{Mar2022,
  author  = {Marques, I. and Kneib, T. and Klein, N.},
  title   = {A non-stationary model for spatially dependent circular response data based on wrapped {G}aussian processes},
  journal = {Statistics and Computing},
  year    = {2022},
  volume  = {32},
  number  = {5},
  pages   = {73}
}

@article{Mas2022,
  title={Modeling animal movement with directional persistence and attractive points},
  author={Mastrantonio, Gianluca},
  journal={The Annals of Applied Statistics},
  volume={16},
  number={3},
  pages={2030--2053},
  year={2022},
  publisher={Institute of Mathematical Statistics}
}

@article{Mei2021,
  title={Nonparametric estimation of circular trend surfaces with application to wave directions},
  author={Meil{\'a}n-Vila, A. and Crujeiras, R. M. and Francisco-Fern{\'a}ndez, M.},
  journal={Stochastic Environmental Research and Risk Assessment},
  volume={35},
  number={4},
  pages={923--939},
  year={2021},
  publisher={Springer}
}

@article{Mil2020,
  author  = {Miller, D. L. and Glennie, R. and Seaton, A. E.},
  title   = {Understanding the stochastic partial differential equation approach to smoothing},
  journal = {Journal of Agricultural, Biological and Environmental Statistics},
  year    = {2020},
  volume  = {25},
  number  = {1},
  pages   = {1--16}
}

@Book{Nel2006,
 author    = {Nelsen, R.},
  title     = {An Introduction to Copulas},
  publisher = {Springer},
 address   = {Portland},
  year      = {2006}
}

@incollection{Oga2023,
  title={Copula bounds for circular data},
  author={Ogata, Hiroaki},
  booktitle={Research Papers in Statistical Inference for Time Series and Related Models: Essays in Honor of Masanobu Taniguchi},
  pages={389--402},
  year={2023},
  publisher={Springer}
}

@article{Pew2021,
  author  = {Pewsey, A. and Garc{\'\i}a-Portugu{\'e}s, E.},
  title   = {Recent advances in directional statistics},
  journal = {TEST},
  year    = {2021},
  volume  = {30},
  pages   = {1--58}
}

@article{Sim2017,
  author  = {Simpson, D. and Rue, H. and Riebler, A. and Martins, T. G. and S{\o}rbye, S. H.},
  title   = {Penalising model component complexity: a principled, practical approach to constructing priors},
  journal = {Statistical Science},
  year    = {2017},
  volume  = {32},
  number  = {1},
  pages   = {1--28}
}

@article{Spi2002,
  author  = {Spiegelhalter, D. J. and Best, N. G. and Carlin, B. P. and van der Linde, A.},
  title   = {Bayesian measures of model complexity and fit},
  journal = {Journal of the Royal Statistical Society: Series B},
  year    = {2002},
  volume  = {64},
  number  = {4},
  pages   = {583--639}
}

@article{Tan2021,
  author  = {Tang, W. and Zhang, L. and Banerjee, S.},
  title   = {On identifiability and consistency of the nugget in {G}aussian spatial process models},
  journal = {Journal of the Royal Statistical Society: Series B },
  year    = {2021},
  volume  = {83},
  number  = {5},
  pages   = {1044--1070}
}

@article{Vat2018,
  title={Generalized additive models for pair-copula constructions},
  author={Vatter, T. and Nagler, T.},
  journal={Journal of Computational and Graphical Statistics},
  volume={27},
  number={4},
  pages={715--727},
  year={2018},
  publisher={Taylor \& Francis}
}

@article{Veh2017,
  author  = {Vehtari, A. and Gelman, A. and Gabry, J.},
  title   = {Practical {B}ayesian model evaluation using leave-one-out cross-validation and {WAIC}},
  journal = {Statistics and Computing},
  year    = {2017},
  volume  = {27},
  number  = {5},
  pages   = {1413--1432}
}

@article{Vih2012,
  author  = {Vihola, M.},
  title   = {Robust adaptive {M}etropolis algorithm with coerced acceptance rate},
  journal = {Statistics and Computing},
  year    = {2012},
  volume  = {22},
  number  = {5},
  pages   = {997--1008}
}

@article{Wat2010,
  author  = {Watanabe, S. and Opper, M.},
  title   = {Asymptotic equivalence of {B}ayes cross validation and widely applicable information criterion in singular learning theory},
  journal = {Journal of Machine Learning Research},
  year    = {2010},
  volume  = {11},
  pages   = {3571--3594}
}

@book{Woo2017,
  title={Generalized additive models: an introduction with {R}},
  author={Wood, S. N.},
  year={2017},
  publisher={Chapman \& Hall/CRC}
}

\clearpage
\begin{center} \LARGE 
 \bf{Supplementary Material}   
\end{center}

\setcounter{section}{0}
\renewcommand{\thesection}{\Alph{section}}
\renewcommand{\thesubsection}{\thesection.\arabic{subsection}} 
\renewcommand{\thefigure}{S\arabic{figure}}
\renewcommand{\thetable}{S\arabic{table}}

\section{Theoretical results and proofs}\label{app:proofs}

\subsection{Details on $p_{1,2}^{\mathrm{PW}}$}\label{apx:PWNmarg}

The joint density \( p_{1,2}^{\mathrm{PW}}(\varphi_1, y_2 \mid \bm{z}) \) is well-defined as it satisfies the standard requirements for a valid probability density function.

First, the total integral over the cylindrical domain equals one:
\[
\int_{-\infty}^{+\infty} \int_0^{2\pi} p_{1,2}^{\mathrm{PW}}(\varphi_1, y_2 \mid \bm{z}) \, d\varphi_1 \, dy_2 = 1,
\]
which follows directly from Lemma~\ref{lem:pw} and the fact that the wrapped marginal integrates the unwrapped joint density over the infinite lattice of \( 2\pi \)-shifts.

Second, non-negativity is ensured since \( p_{1,2}^{\mathrm{PW}} \) is defined as a countable sum of non-negative terms, each being the evaluation of a Euclidean joint density:
\[
p_{1,2}^{\mathrm{PW}}(\varphi_1, y_2 \mid \bm{z}) = \sum_{k \in \mathbb{Z}} p_{1,2}(\varphi_1 + 2\pi k, y_2 \mid \bm{z}) \geq 0.
\]

Third, \( p_{1,2}^{\mathrm{PW}} \) satisfies \( 2\pi \)-periodicity in the circular component. For all \( j \in \mathbb{Z} \), we have:
\begin{align*}
p_{1,2}^{\mathrm{PW}}(\varphi_1 + 2\pi j, y_2 \mid \bm{z}) 
&= \sum_{k \in \mathbb{Z}} p_{1,2}(\varphi_1 + 2\pi j + 2\pi k, y_2 \mid \bm{z}) \\
&= \sum_{h \in \mathbb{Z}} p_{1,2}(\varphi_1 + 2\pi h, y_2 \mid \bm{z}) \\
&= p_{1,2}^{\mathrm{PW}}(\varphi_1, y_2 \mid \bm{z}),
\end{align*}
where the index substitution \( h = j + k \) justifies equality.

Hence, \( p_{1,2}^{\mathrm{PW}} \) is a valid cylindrical probability density function.

\subsection{Proof of Lemma \ref{lem:pw}}\label{app:prooflemma}

\begin{proof} Marginalization of the circular part $\int_0^{2\pi} p_{1,2}^{PW}(\varphi_1,y_2\mid \bm{z}) \, d\varphi_1$ yields
 \begin{align*}
     &\; \int_0^{2\pi}\sum_{k \in \mathbb{Z}} c\!\left( F_1(\varphi_1 + 2\pi k \mid \bm{z}),  F_2(y_{2} \mid \bm{z}) \middle| \bm{z} \right) p_1(\varphi_1 + 2\pi k \mid \bm{z})p_2(y_{2} \mid \bm{z}) \, d\varphi_1\\
     &\overset{(a)}{=}\sum_{k \in \mathbb{Z}} \int_0^{2\pi} c\!\left( F_1(\varphi_1 + 2\pi k \mid \bm{z}),  F_2(y_{2} \mid \bm{z}) \middle| \bm{z} \right) p_1(\varphi_1 + 2\pi k \mid \bm{z})p_2(y_2 \mid \bm{z})\, d\varphi_1\\
     &\overset{(b)}{=}\int_{-\infty}^{+\infty} c\!\left( F_1(y_1 \mid \bm{z}),  F_2(y_{2} \mid \bm{z}) \middle| \bm{z} \right) p_1(y_1 \mid \bm{z})p_2(y_2 \mid \bm{z})\, dy_1 \\
     &\overset{(c)}{=}p_2(y_2 \mid \bm{z}).
 \end{align*}
 Marginalization of the linear part $\int_{-\infty}^{\infty} p_{1,2}^{PW}(\varphi_1,y_2\mid \bm{z}) \, dy_2$ yields
 \begin{align*}
     &\; \int_{-\infty}^{\infty} \sum_{k \in \mathbb{Z}} c\!\left( F_1(\varphi_1 + 2\pi k \mid \bm{z}),  F_2(y_{2} \mid \bm{z}) \middle| \bm{z} \right) p_1(\varphi_1 + 2\pi k \mid \bm{z})p_2(y_{2} \mid \bm{z}) \, dy_2\\
     &\overset{(a)}{=}\sum_{k \in \mathbb{Z}} \int_{-\infty}^{\infty}  c\!\left( F_1(\varphi_1 + 2\pi k \mid \bm{z}),  F_2(y_{2} \mid \bm{z}) \middle| \bm{z} \right) p_1(\varphi_1 + 2\pi k \mid \bm{z})p_2(y_2 \mid \bm{z})\, dy_2\\
     &\overset{(c)}{=}\sum_{k \in \mathbb{Z}} p_1(\varphi_1+2\pi k \mid \bm{z}) \\
     &=p_1^W(\varphi_1 \mid \bm{z}).
 \end{align*}
At (a), we use the dominated convergence theorem, at (b) we
use the concatenation of integrals, and at (c) Sklar's theorem.
\end{proof}

\section{Summary on copula specifications} \label{a:Table_copula}

\begin{table}[ht]

  \caption{Considered one-parameter copula functions with their admissible range for the association parameter $\rho$ 
  and the corresponding link function. 
  Here, $\Phi_{2}(\cdot, \cdot; \rho)$ denotes the CDF of a standard bivariate Gaussian distribution 
  with correlation coefficient $\rho$, and $\Phi(\cdot)$ denotes the CDF of a standard univariate Gaussian distribution. 
}
\label{t:cop}
\begin{center}
\begin{tabular}{lccc}
\hline\hline
Copula & \multicolumn{1}{c}{$C(u_1,u_2 \, ; \, \rho )$} &  \multicolumn{1}{c}{Range of $\rho$} & 
\multicolumn{1}{c}{Link function} \\ \hline
Gaussian & $\Phi_{2}\!\left(\Phi^{-1}(u_{1}), \, \Phi^{-1}(u_{2}) \, ; \, \rho\right)$ & $[-1,1]$ & $\dfrac{\rho}{\sqrt{1-\rho^{2}}}$  \\
Clayton  & $\big(u_{1}^{-\rho}+u_{2}^{-\rho}-1\big)^{-1/\rho}$ & $(0, \infty)$ & $\log(\rho 
)$ \\
Gumbel   & $\exp \!\left[-\left\{\big(-\log u_{1}\big)^{\rho}+\big(-\log u_{2}\big)^{\rho}\right\}^{1/\rho}\right]$ & $(1, \infty)$ & $\log(\rho - 1
)$ \\
\hline\hline
\end{tabular}
\end{center}
\end{table}

\section{Details on MCMC sampler}\label{app:MCMC}
\subsection{Full conditionals for Gibbs sampler}\label{apx:fullCon}
Let's consider a vector of Gaussian observations $\bm{y}=(y_1, \ldots, y_n)^\top$, and likelihood with mean parameters $\{\mu_i\}_{i=1}^n$ and variances $\sigma^2$, and a generic mean predictor 
\[
\boldsymbol{\eta}_{\bm{\mu}} = Z_0 \beta_0 + Z_1\boldsymbol{\beta}_1 + \ldots +Z_J\boldsymbol{\beta}_J,
\]
where $Z_0=\bm{1}$ is the unit vector, $\beta_0$ denotes the intercept, $Z_j$ a generic design matrix, related to some effects, and $\bm{\beta}_j$ is the corresponding parameter vector. For all mean coefficients endowed with Gaussian (N) priors (e.g., $\beta_0 \sim N(0, a_0)$ and $\bm{\beta}_j \mid \xi_j^2 \sim N(0\,,\,\xi_j^2\mathbf{I})$) and Inverse Gamma (IG) priors for the variance (e.g, $\sigma^2\sim IG(0.001,0.001)$), full conditional updates can be derived analytically. In particular, the product of two Gaussian densities is proportional to a Gaussian density, which implies conjugacy and leads to closed-form full conditional distributions. \citep[see, e.g.,][]{Fah2022}. Exploiting this result, we obtain the following general full conditional forms:
\[
\boldsymbol{\beta}_j\mid \cdot \sim 
\mathcal{N}\!\left(
  (E + F)^{-1}(E\mathbf{e} + F\mathbf{f}),\;
  (E + F)^{-1}
\right),
\]
$$
\sigma^2 \mid \cdot \sim 
\text{IG}\!\left(
0.001 + \frac{n}{2},
\;0.001 + \frac{1}{2}(\mathbf{y} - \boldsymbol{\eta}_\mu)^\top
(\mathbf{y} - \boldsymbol{\eta}_\mu)
\right).
$$
where $E= \frac{1}{\sigma^2} {Z}_j^\top{Z}_j$ collects the information from the Gaussian likelihood, $F= 1/\xi^{2}\mathbf{I}$ (or $F= 1/a_0$ for the intercept parameter) encodes the corresponding Gaussian prior contribution, $\mathbf{e} = Z_j^\top(Z_j Z_j^\top)^{-1}(\mathbf{Y} - \bm{\eta}_\mu +Z_j\bm{\beta}_j)$ and
$\mathbf{f} = \bm{0}$.

\subsection{Gaussian Markov random field precision matrix}\label{apx:GMRF}

For a stationary $\GRF$ solution to the $\SPDE$, the $\GMRF$ precision is
\[
Q = \tau\left(\kappa^4 C + 2\kappa^2 G + G C^{-1} G\right)\tau,
\]
where $C$ and $G$ are $\FEM$ mass and stiffness matrices, respectively.

With spatially varying $\kappa(\mathbf{s})$, using mass lumping so $C$ is diagonal:
\[
Q = \tau\left(
\operatorname{diag}(\kappa)^2 C \operatorname{diag}(\kappa)^2 +
\operatorname{diag}(\kappa)^2 G +
G' \operatorname{diag}(\kappa)^2 +
G C^{-1}G
\right)\tau.
\]

\section{ Penalized Complexity Priors for variances}\label{apx:PC}

Penalized Complexity (PC) priors \citep{Sim2017} formalize Occam’s razor by shrinking toward a \emph{base model} and penalizing deviations from it at a constant rate. For variance components, the base model is the "no‐variance" model (random–effect variance equal to zero).

\vskip.5cm \paragraph{Occam's razor} A simpler model formulation (base model) should be preferred until there is enough support for a more complex model. 

\vskip.5cm \paragraph{Measure of complexity}
Let $p=p(y\mid \xi)$ denote the likelihood under parameter $\xi$ (e.g., a variance
or standard deviation) and let $p_0=p(y\mid \xi_0)$ be the likelihood under the base model. Define the unidirectional distance from the base model via the
Kullback–Leibler divergence (KLD):
\[
d(\xi) \;=\; \sqrt{2\,\mathrm{KLD}\!\left(p\,\Vert\,p_0)\right)}\,, \text{where }
\,
\mathrm{KLD}(p\Vert q)=\int p(u)\log\!\Big(\tfrac{p(u)}{q(u)}\Big)\,du.
\]
Here $d(\xi)\ge 0$ and $d(\xi_0)=0$. Larger $d$ means greater deviation from the simpler model.

\vskip.5cm \paragraph{Constant rate penalization}
PC priors assume an exponential law on the distance scale,
\[
\pi_d(d) \;=\; \lambda\,\exp(-\lambda d), \qquad d\ge 0,
\]
which yields (by change of variables) the prior on $\xi$:
\[
\pi(\xi) \;=\; \lambda \exp\!\big(-\lambda\,d(\xi)\big)\,
\Bigl|\tfrac{d}{d\xi}d(\xi)\Bigr|.
\]
This places maximal prior mass at the base model and yields exponentially
decreasing weight for more complex models.

\vskip.5cm \paragraph{User–defined scaling (tail calibration)}
The rate $\lambda$ is fixed by a simple tail–probability statement reflecting
a sensible scale for the parameter. Once $U>0$ as a “sensible”, user-defined upper bound that specifies the prior knowledge about “tail events” and $\alpha\in(0,1)$ as the weight to put on this event, are chosen, the condition
$\mathbb{P}\left(Q(\xi)>U\right)=\alpha$ is imposed for a monotone scale function $Q$, which is an interpretable transformation of the flexibility parameter.
This condition allows the user to prescribe how informative the resulting PC prior shall be.

\section{Details on model choice}\label{s:evmet}
The $\DIC$ and $\WAIC$ are simulation-based criteria, which are straightforward to compute from $\MCMC$ output and have been widely used for comparing response distributions and predictor specifications, in a stepwise model choice strategy. The $\DIC$ combines a measure of model fit, given by the average deviance across posterior draws $\overline{D(\vartheta)}$, with a penalty for model complexity, defined as the difference between the mean deviance and the deviance at a representative point of the posterior (e.g., the posterior mean) $D(\bar\vartheta)$.
The criterion is formally defined as
$   \text{DIC}= 
\frac{2}{T} \sum D(\bm{\vartheta}^{[t]})-D(\frac{1}{T} \, \sum\bm{\vartheta}^{[t]})$ where $D(\bm{\vartheta}) = -2 \log p(y \mid \bm{\vartheta}).
$

The $\WAIC$ provides a fully Bayesian alternative, based on the log-pointwise predictive density $\text{lppd} = \sum_{i=1}^n \log \int p(y_i \mid \bm{\vartheta}) \, p(\bm{\vartheta} \mid y) \, d\bm{\vartheta}$ with a complexity penalty given by the variance of the log-likelihood across posterior draws and summed over data points $ p_{\text{WAIC}} = \sum_{i=1}^n 2 \, \text{var}_{\bm{\vartheta}}\!\big( \log p(y_i \mid \bm{\vartheta}) \big),$ (\cite{Gel2013,Veh2017}), yielding to $
   \text{WAIC} = -2 \, (\text{lppd} - p_{\text{WAIC}}).$
The $\WAIC$ is invariant to parameterization, robust in the presence of skewed or multimodal posteriors, while the $\DIC$ may be sensitive when posterior distributions deviate from normality. Since both are sample-based, slight differences between competing models may lead to a region of indecisiveness; however, the criteria have been shown to favor sparser models when combined with the exclusion of effects whose credible intervals include zero.
Simulation studies in \cite{Kle2015} further suggest that the $\DIC$
effectively detect omitted relevant covariates, while the inclusion of irrelevant ones usually results only in insignificant effects.

\section{Simulation study}\label{apx:sim}
We report here the table and figures discussed in Sec. \ref{s:sim}.

\begin{figure}[h!]
    \centering
    \includegraphics[width=1\linewidth]{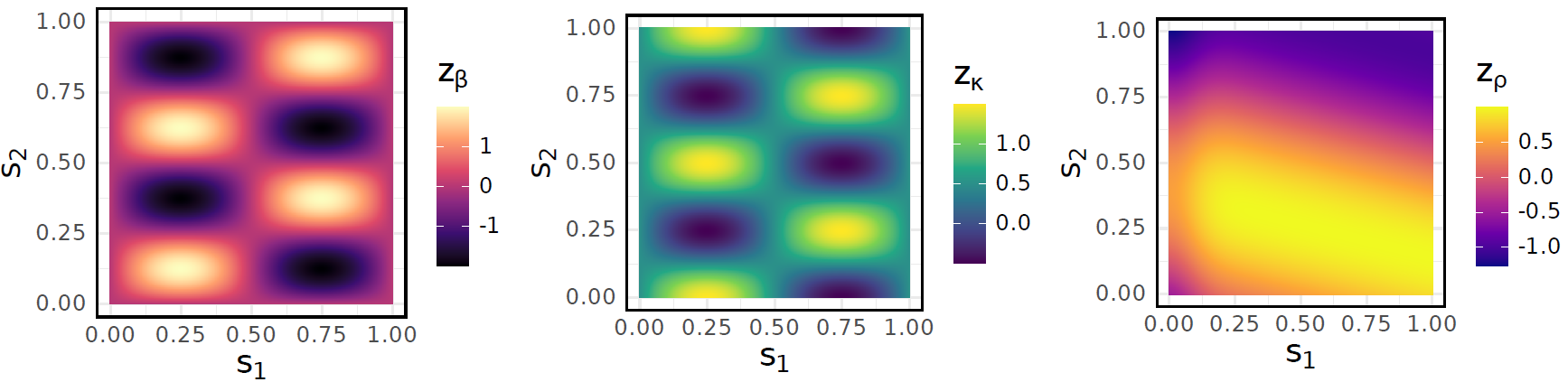}
    \caption{The three boxes show, from left to right, the covariates functions $z_\beta(\svec)=2\sin(2\pi \,s_1)\sin(4\pi\,s_2)$, $z_\kappa(\svec)= 1/2+\sin(2\pi \, s_1)\cos(4\pi\,s_2)$, and $z_\rho(\bm{s})=\sin(4*s_2+s_1)-\frac{1}{2}\exp(-64s_1^2)$ used in the simulation study.}
    \label{fig:cov}
\end{figure}

\begin{table}[h!]
\centering
\caption{Percentage of replications with minimum $\DIC$ among the three investigated copula~models. Columns report combinations of the true copula model under which data were generated and sample size $n$, while the rows show the fitted copula model (N=``Gaussian'', C=``Clayton'', G=``Gumbel''), with a constant or covariate-dependent (varying) copula dependence parameter $\rho$.}
\small
\begin{tabular}{l l rrrrrrrrr}
\toprule
\textbf{DIC} & & \multicolumn{3}{c}{Clayton} & \multicolumn{3}{c}{Gumbel} & \multicolumn{3}{c}{Gaussian} \\
\cmidrule(lr){3-5}\cmidrule(lr){6-8}\cmidrule(lr){9-11}
& & $n{=}250$ & $n{=}500$ & $n{=}750$ &
      $n{=}250$ & $n{=}500$ & $n{=}750$ &
      $n{=}250$ & $n{=}500$ & $n{=}750$ \\
\midrule
\multirow{3}{*}{\textbf{constant $\rho$}} 
& C & 100 & 100 & 100 &\;\; 0 &\;\; 0 &\;\; 0 &\;\; 5 &\;\; 0 &\;\; 0 \\
& G &\;\; 0 &\;\; 0 &\;\; 0 &  100 & 100 & 100 &\;\; 21 &\;\; 29 &\;\; 13 \\
& N &\;\; 0 &\;\; 0 &\;\; 0 &\;\; 0 &\;\; 0 &\;\; 0 &  74 &  71 & 87 \\
\midrule
\multirow{3}{*}{\textbf{varying $\rho$}} 
& C & 100 & 100 & 100 &\;\; 0 &\;\; 0 &\;\; 0 &\;\; 0 &\;\; 0 &\;\; 0 \\
& G &\;\; 0 &\;\; 0 &\;\; 0 & 94 & 100 & 99 &  0 &\;\; 0 &\;\; 0 \\
& N &\;\; 0 &\;\; 0 &\;\; 0 &\;\; 6 &\;\; 0 &\;\; 1 &  100 &  100 & 100 \\
\bottomrule
\end{tabular}

\label{tab:perc_DIC}
\end{table}

\newpage

We report the empirical evaluations for comparing $\DIC$ values under the true model and the other two competitors.\\

    \begin{figure}[h!]
    \centering
    \includegraphics[width=0.7\linewidth]{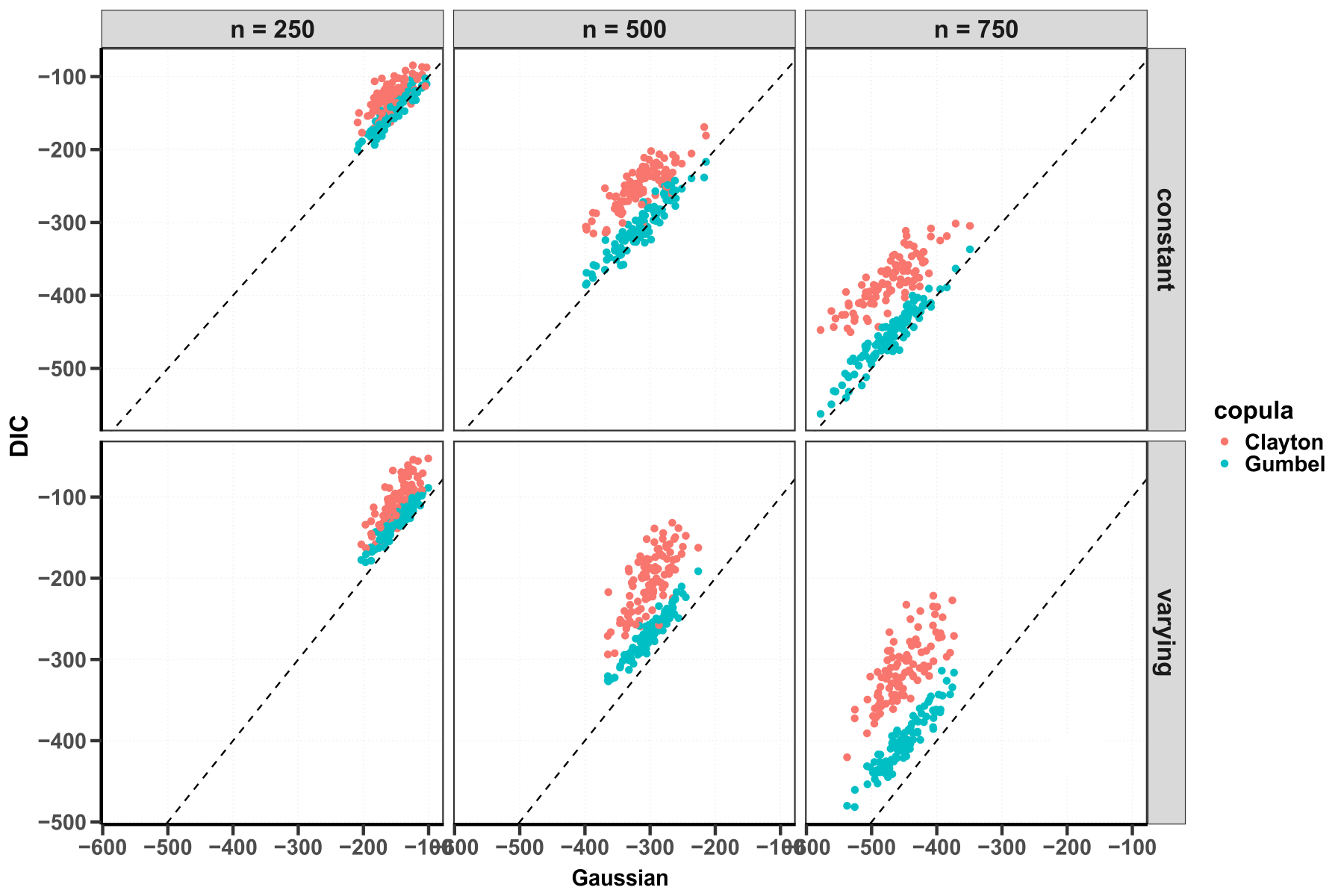}
    \caption{Comparison of $\DIC$ values under the correctly specified Gaussian copula model against the misspecified Clayton and Gumbel copula models. Columns correspond to sample sizes $n=250, 500, 750$, while rows display the two dependence scenarios: \textbf{constant} copula parameter $\rho$, and \textbf{varying} $\rho$  (covariate-dependent $\rho$).}
    \label{fig:DIC_N}
\end{figure}

       \begin{figure}[h!]
    \centering
\includegraphics[width=0.7\linewidth]{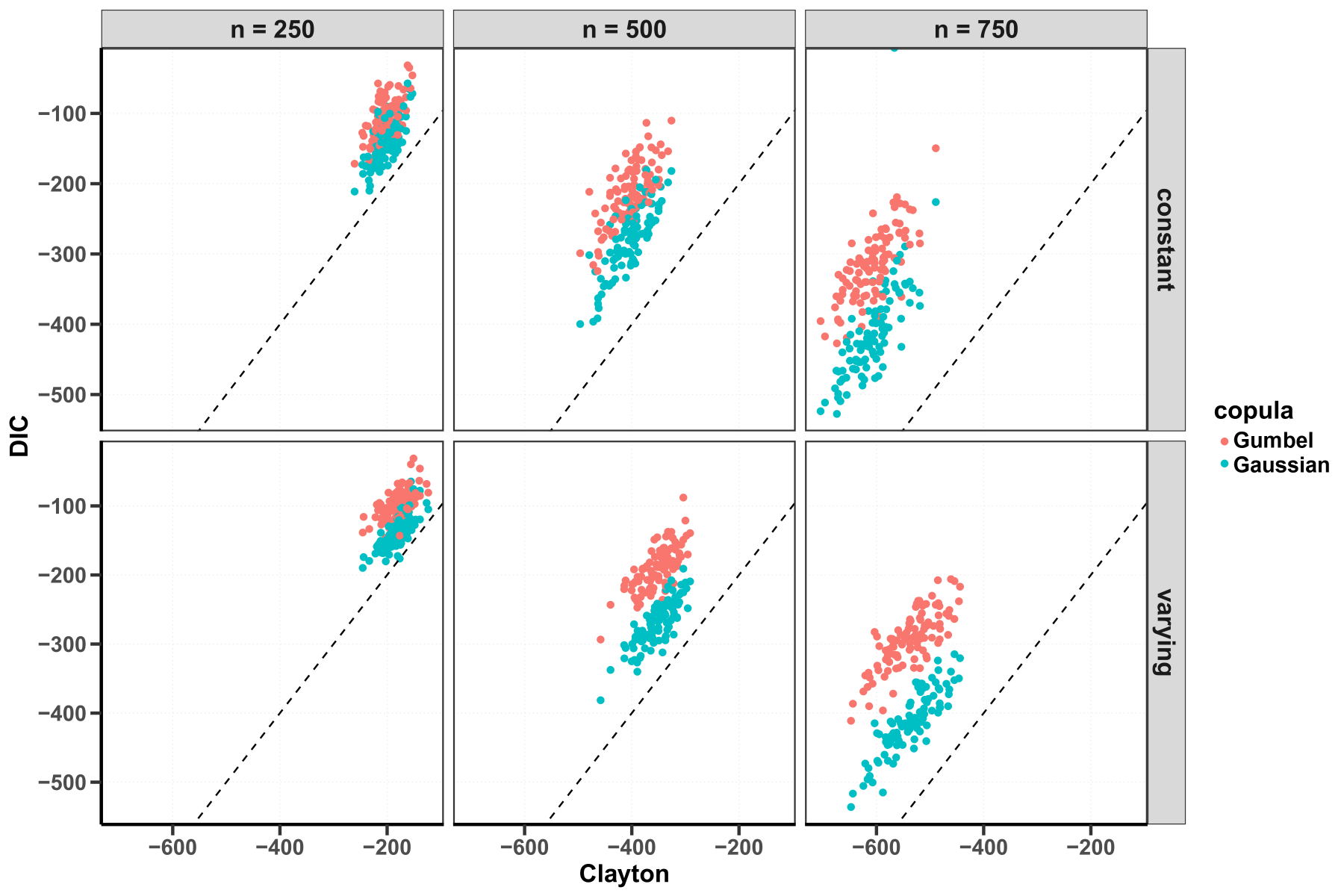}
    \caption{Comparison of $\DIC$ values under the correctly specified Clayton copula model against the misspecified Gaussian and Gumbel copula models. Columns correspond to sample sizes $n=250, 500, 750$, while rows display the two dependence scenarios: \textbf{constant} copula parameter $\rho$, and \textbf{varying} $\rho$  (covariate-dependent $\rho$).}
    \label{fig:DIC_C}
\end{figure}

\begin{figure}[h!]
    \centering \includegraphics[width=0.7\linewidth]{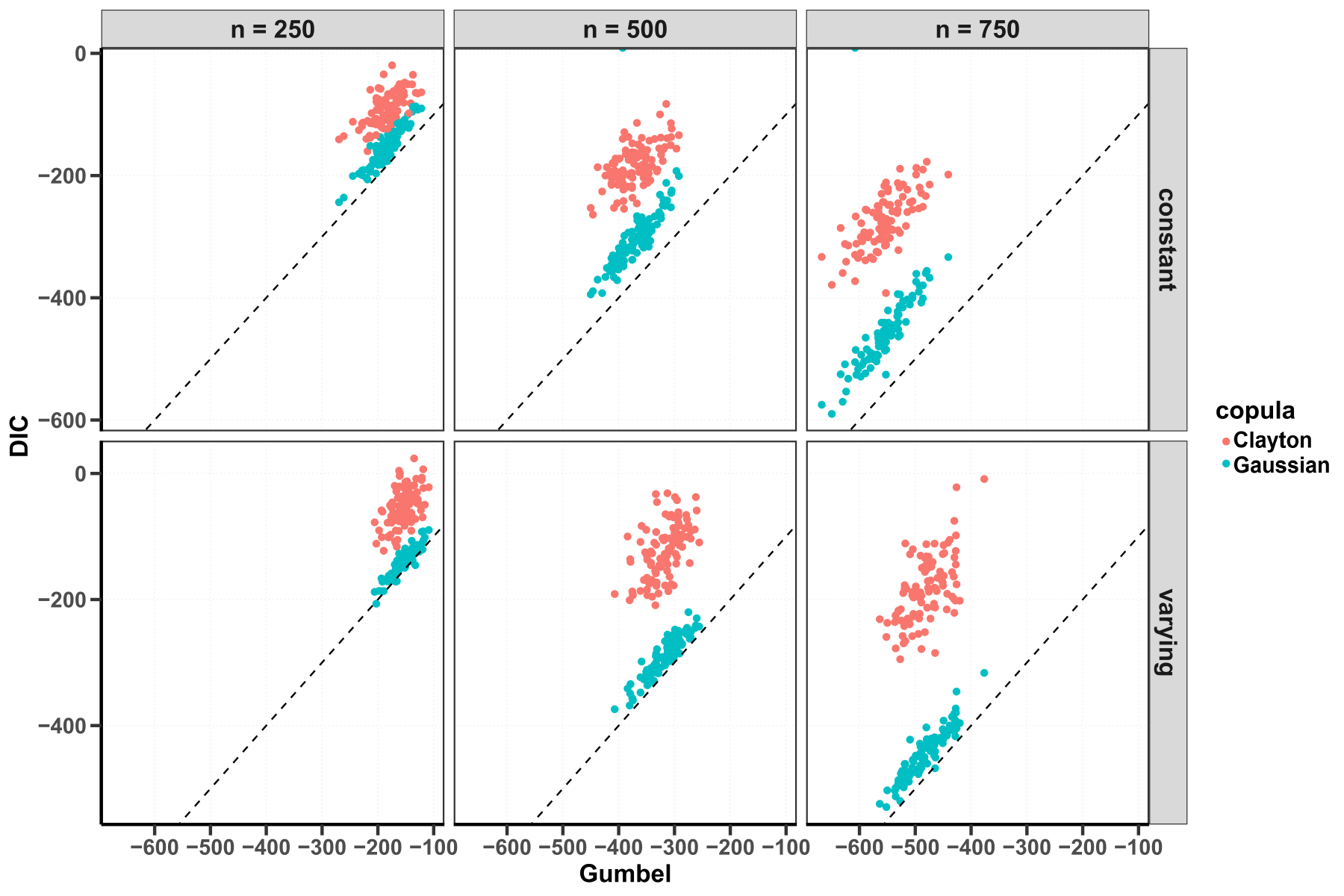}
    \caption{Comparison of $\DIC$ values under the correctly specified Gumbel copula model against the misspecified Gaussian and Clayton copula models. Columns correspond to sample sizes $n=250, 500, 750$, while rows display the two dependence scenarios: \textbf{constant} copula parameter $\rho$, and \textbf{varying} $\rho$  (covariate-dependent $\rho$).}
    \label{fig:DIC_G}
\end{figure}

\newpage 
We report the empirical evaluations for comparing $\WAIC$ values under the true model and the other two competitors.\\

\begin{table}[h!]
    \centering
    \caption{Percentage of replications with minimum $\WAIC$ among the three investigated copula~models. Columns report combinations of the true copula model under which data were generated and sample size $n$, while the rows show the fitted copula model (N="Gaussian", C="Clayton", G="Gumbel"), with a constant or covariate-dependent (varying) copula dependence parameter $\rho$.}
\begin{tabular}{l l rrrrrrrrr}
\toprule
\textbf{WAIC} & & \multicolumn{3}{c}{Clayton} & \multicolumn{3}{c}{Gumbel} & \multicolumn{3}{c}{Gaussian} \\
\cmidrule(lr){3-5}\cmidrule(lr){6-8}\cmidrule(lr){9-11}
& & $n{=}250$ & $n{=}500$ & $n{=}750$ &
      $n{=}250$ & $n{=}500$ & $n{=}750$ &
      $n{=}250$ & $n{=}500$ & $n{=}750$ \\
\midrule
\multirow{3}{*}{\textbf{constant $\rho$}} 
& C & 100 & 100 & 100 &\;\; 0 &\;\; 0 &\;\; 0 &\;\;  6 &\;\; 3 &\;\; 0 \\
& G &\;\; 0 &\;\; 0 &\;\; 0 &  100 & 100 & 100 &\;\; 47 &\;\; 51 &\;\; 51 \\
& N &\;\; 0 &\;\; 0 &\;\; 0 &\;\; 0 &\;\; 0 &\;\; 0 &   47 &  46 & 49 \\
\midrule
\multirow{3}{*}{\textbf{varying $\rho$}} 
& C & 100 & 100 & 100 &\;\; 0 &\;\; 0 &\;\; 0 &\;\; 0 &\;\; 0 &\;\; 0 \\
& G &\;\; 0 &\;\; 0 &\;\; 0 & 96 & 100 & 100 &  0 &\;\; 0 &\;\; 0 \\
& N &\;\; 0 &\;\; 0 &\;\; 0 &\;\; 4 &\;\; 0 &\;\;0 &  100 &  100 & 100 \\
\bottomrule
\end{tabular}
\label{tab:perc_WAIC}
\end{table}

\begin{figure}[h!]
    \centering
    \includegraphics[width=0.7\linewidth]{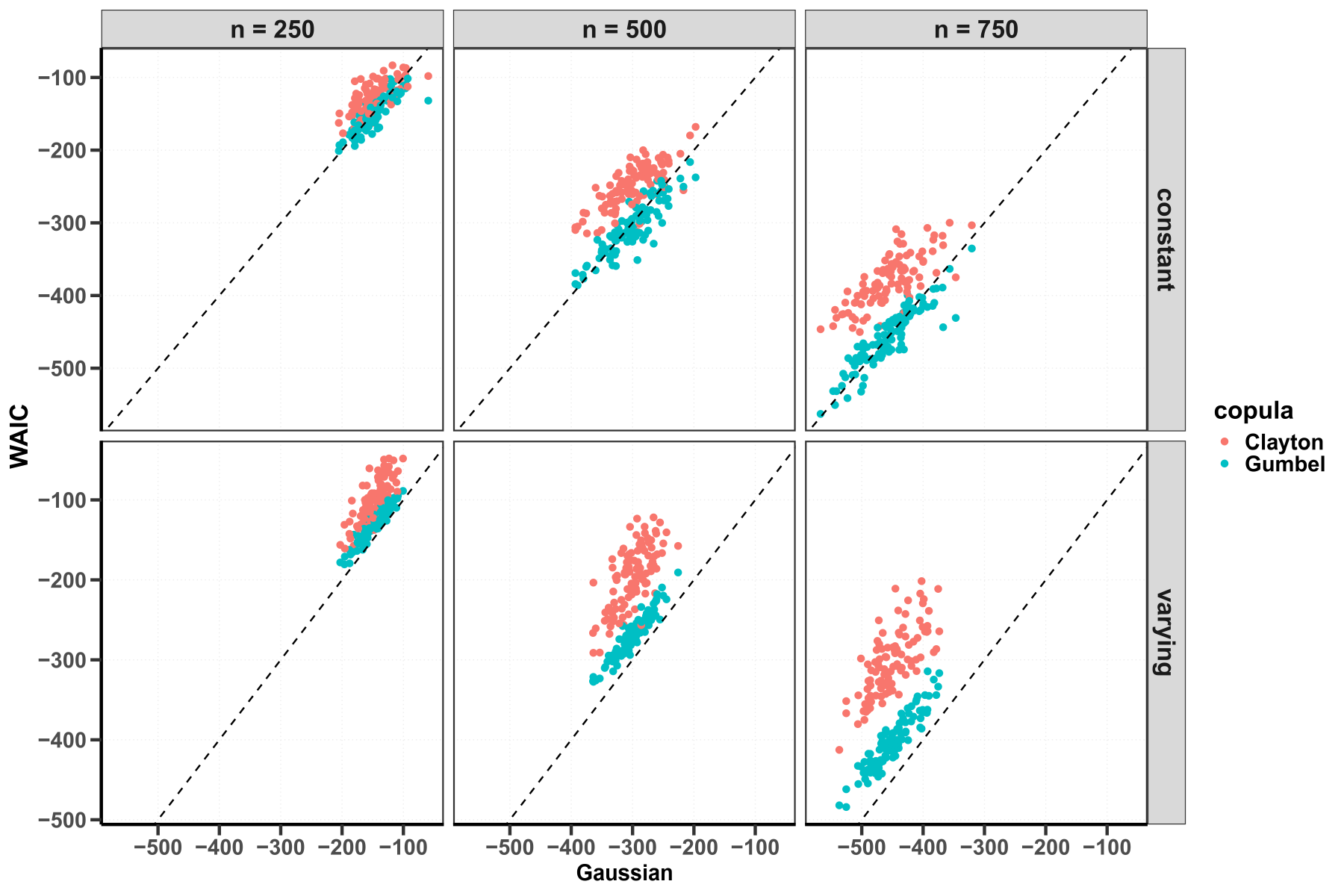}
    \caption{Comparison of $\WAIC$ values under the correctly specified Gaussian copula model (x-axis) against the misspecified Clayton and Gumbel copula models (y-axis). Columns correspond to increasing sample sizes $n=250, 500, 750$, while rows display the two dependence scenarios: \textbf{constant}, with a constant copula parameter $\rho$, and \textbf{varying}, where $\rho$ depends on a covariate.}
    \label{fig:WAIC_N}
\end{figure}

\begin{figure}[h!]
    \centering
    \includegraphics[width=0.7\linewidth]{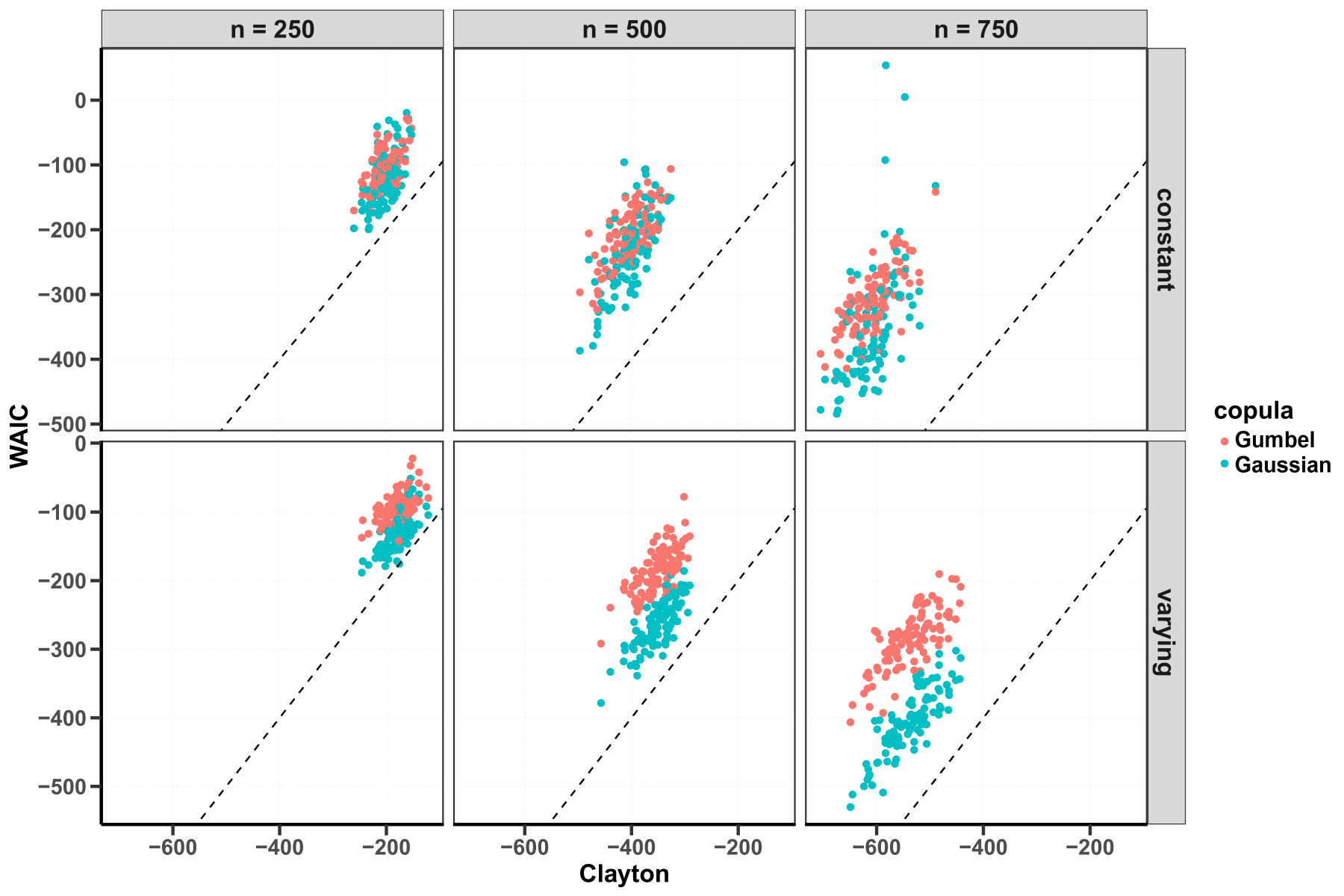}
    \caption{Comparison of $\WAIC$ values under the correctly specified Clayton copula model (x-axis) against the misspecified Gumbel and Gaussian copula models (y-axis). Columns correspond to increasing sample sizes $n=250, 500, 750$, while rows display the two dependence scenarios: \textbf{constant}, with a constant copula parameter $\rho$, and \textbf{varying}, where $\rho$ depends on a covariate.}
    \label{fig:WAIC_C}
\end{figure}

\begin{figure}[h!]
    \centering
    \includegraphics[width=0.7\linewidth]{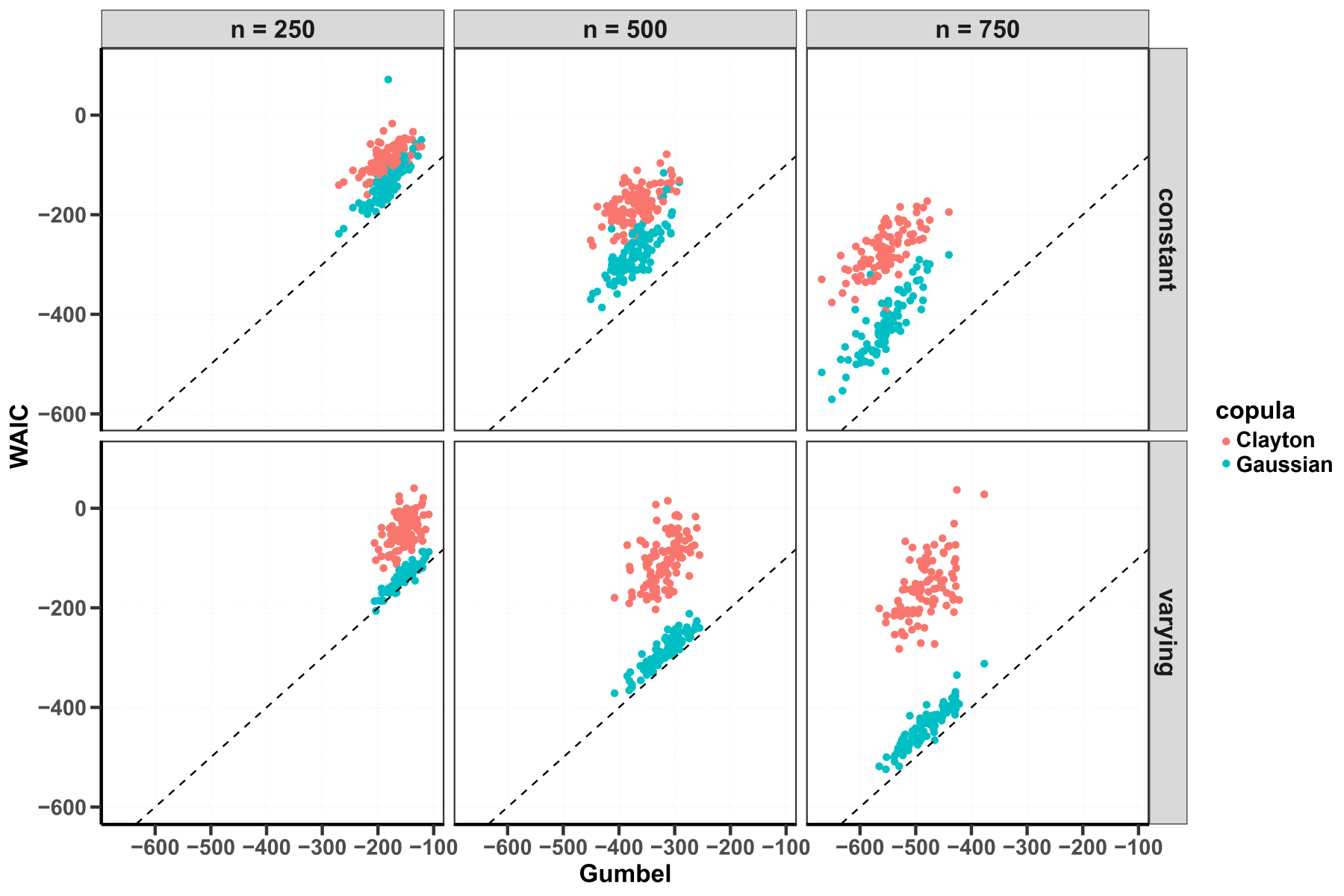}
    \caption{Comparison of $\WAIC$ values under the correctly specified Gumbel copula model (x-axis) against the misspecified Clayton and Gaussian copula models (y-axis). Columns correspond to increasing sample sizes $n=250, 500, 750$, while rows display the two dependence scenarios: \textbf{constant}, with a constant copula parameter $\rho$, and \textbf{varying}, where $\rho$ depends on a covariate.}
    \label{fig:WAIC_G}
\end{figure}

\newpage

\section{Supplementary details for Section \ref{s:appl} }\label{app:app-fig}

\begin{figure}[h!]
    \centering
        \begin{tabular}{cc}
    \textbf{ \hspace{0.25cm} \textit{   Wind direction}}  \hspace{2cm}  & \hspace{1.5cm}  \textbf{\textit{Wind speed}}\\
    \end{tabular}
    \includegraphics[width=0.85\linewidth]{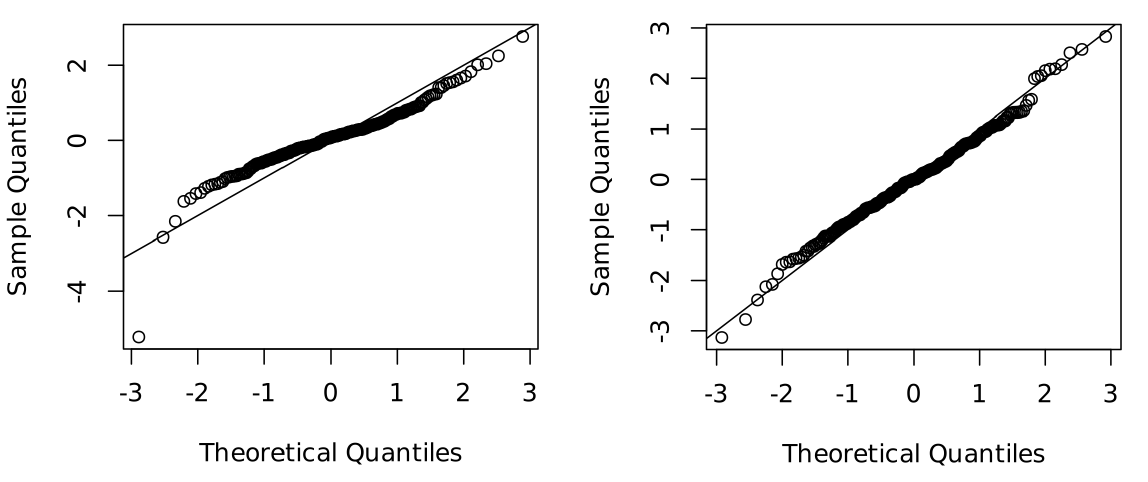}
    \caption{Randomized quantile residuals for the marginal models, corresponding to the linear model (left) for the wind speed and the circular model (right) for the wind direction.}
    \label{fig:qqplot}
\end{figure}

\end{document}